\newtheorem{proposition}{Proposition}
\newtheorem{lemma}{Lemma}
\newtheorem{theorem}{Theorem}
\newtheorem{definition}{Definition}
\newtheorem{example}{Example}
\newtheorem{corollary}{Corollary}
\newtheorem{remark}{Remark}
\definecolor{reviewercolor}{RGB}{150,90,70}    
\definecolor{revisioncolor}{RGB}{233,150,122}    
\definecolor{light}{RGB}{252,248,246}
\newcommand{\revision}[2][]{%
\begin{tcolorbox}[
  breakable,
  colback=light,
  colframe=revisioncolor,
  title={%
    \textbf{Revision / Response}%
    \ifx#1\empty\else\hfill\checkmark\fi
  },
  fonttitle=\bfseries,
  left=1mm,
  right=1mm,
  top=1mm,
  bottom=1mm
]
#2
\end{tcolorbox}\vspace{0.5cm}
}
\newcommand{\settitle}{\@maketitle}
\newcommand{\Span}{\text{span}}
\newcommand{\Code}{\mathcal{C}}
\newcommand{\CNOT}{\mathsf{CNOT}}
\newcommand{\CZ}{\mathsf{CZ}}
\newcommand{\CU}{\mathsf{C}U}
\newcommand{\CSS}{\mathsf{CSS}}
\newcommand{\SWAP}{\mathsf{SWAP}}
\newcommand{\Sym}{\text{Sym}}
\newcommand{\Stab}{\mathcal{S}}
\newcommand{\F}{\mathbb{F}}
\title{On the addressability problem on CSS codes}
\author{Jérôme Guyot}
\email{jerome.guyot@ens-paris-saclay.fr}
\affiliation{ENS Paris-Saclay, Université Paris-Saclay,  France }
\author{Samuel Jaques}
\email{sejaques@uwaterloo.ca}
\affiliation{University of Waterloo, Canada}
\date{}
\begin{document}

\maketitle

\begin{abstract}

Recent discoveries in asymptotically good quantum codes have intensified research on their application in quantum computation and fault-tolerant operations. This study focuses on the addressability problem within CSS codes: we ask what circuits might implement logical gates on strict subsets of logical qubits. With some notion of fault-tolerance, we prove several impossibility results: for CSS codes with non-zero rate, one cannot address a logical $H$, $HS$, $SH$, or $\CNOT$ to any non-empty strict subset of logical qubits using a circuit made only from 1-local Clifford gates. 

Furthermore, we show that one cannot permute the logical qubits in a code purely by permuting the physical qubits, if the rate of the code is (asymptotically) greater than $\frac{1}{3}$ and the distance is at least 3. We can show a similar no-go result for CNOTs and CZs between two such high-rate codes, albeit under a more restrictive assumption on the circuit, which we call ``global" (though recent addressable CCZ gates use global circuits).

This work pioneers the study of distance-preserving addressability in quantum codes, mainly by considering automorphisms of the code. This perspective offers new insights and potential directions for future research. We argue that studying this trade off between addressability and efficiency of the codes is essential to understand better how to do efficient quantum computation.

\end{abstract}

\keywords{Addressability, Quantum error correction, Quantum computing, CSS codes}

\section*{Introduction}

\subsection*{Motivation}
Quantum computers are particularly vulnerable to noise, and so the most promising path to large-scale quantum computing is to use error-correcting codes. In these codes, many \emph{physical} qubits are combined into one or more \emph{logical} qubit(s), such that the logical qubits are long-lived and error-resistant.

A drawback of quantum error correction is that, by design, it becomes difficult to modify the encoded logical state. Unlike with classical error-correcting codes, we cannot decode the state to compute on it, as it is unlikely to remain coherent long enough for any operation. Thus, we need fault-tolerant quantum computation: not only should we have a method to encode the data, but we should also be able to operate on it \emph{while} it is encoded. 

A powerful tool in constructing fault-tolerant quantum computation is a \emph{transversal} gate. Strictly speaking, this is any physical circuit that is guaranteed not to propagate errors between qubits of the code, and more commonly we require that it enacts some specific action on the logical state as well. 

As an example, in a self-dual CSS code, applying an $H$ (Hadamard) gate to all physical qubits in the code will not only preserve the codespace, it will effectively apply an $H$ gate to all \emph{logical} qubits in the code. However, in most quantum circuits we need more precision than this. We need to be able to apply specific gates to \emph{only one} qubit in the code. We need our fault-tolerant operations to be \emph{addressable}.

This distinction does not matter for surface codes, which (depending on the precise description) encode only one logical qubit. A large-scale surface code computation is best seen as a collection of codes, each working independently. We will later use the words \emph{splitting codes} when, as in this example, the global code can be ``cut" into independent sub-codes. In particular, for this collection of independent surface codes, any transversal gate can be targeted to a single logical qubit (or pair of qubits for a CNOT) by simply applying the gates only to those physical qubits corresponding to the desired logical qubit. 

However, this strategy fails for more complicated codes that encode many qubits, as each physical qubit no longer corresponds to only one logical qubit, as different logical operators will overlap in their supports (e.g., the overlaps in \Cref{fig:image}). For good performance, the logical qubits are not spatially localized in this way. If we apply a gate to one physical qubit, it will impact many logical qubits.

Addressability is crucial for efficient quantum computation, as we need the flexibility to apply any gate to any logical qubit. Transversality only guarantees efficiency, but is limited in its expressive power, as the Eastin-Knill theorem highlights. Addressability provides a more detailed view of a code’s structure by considering each logical qubit individually. Studying addressability helps identify fundamental trade-offs in designing fault-tolerant operations for high-rate quantum codes, ensuring that logical operations remain precise and scalable in larger quantum systems.

\begin{figure}[H]
\centering
        \includegraphics[width=\columnwidth]{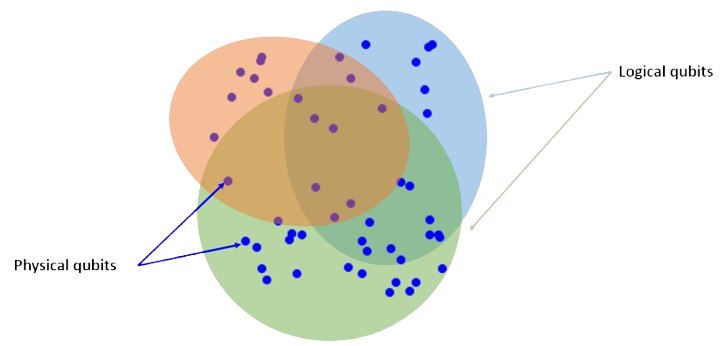}
        \caption{Visualization of a code}
        \label{fig:image}
\end{figure}

This addressability problem has become relevant recently, with the development of asymptotically good quantum codes~\cite{quantum_tanner_codes,asymptotically_good_ldpc_PK}. As the size of the code increases, the number of logical qubits in these codes approaches the number of physical qubits up to a constant factor (strictly smaller than 1), while maintaining good code distance, which means logical qubits cannot be spatially localized. Thus, it is a challenging problem to find efficient physical circuits which can address specific logical qubits for some desired logical gate, and that is what we address in this paper.

We highlight that the addressability problem is only interesting when restricting the set of possible implementations. In particular, for any code, the trivial ``decode, then apply gates, then encode'' method could work. However, this method is neither efficient nor fault-tolerant. Similarly, addressability is often reached using code-surgeries or magic state injection methods, which are outside of the scope of this study. Here we consider fault-tolerant unitary implementations.

\subsection*{Methods}
We consider the problem of efficient physical circuits to enact logical gates addressably. There are two trivial ways to make addressable gates: the first is to decode, apply the gate, then re-encode, and the second is to use a collection of independent codes which admit transversal gates. 

The first trivial method is problematic because it is not fault tolerant. The quantum state is unprotected after decoding. Thus, we consider gate sets that will not alter the distance of code. We take three approaches for this: first, 1-local circuits (i.e., circuits built from single-qubit gates); second, circuits made of SWAPs or other permutations; third, circuits made from a depth-1 CNOT or CZ circuit.

The second trivial method, using a collection of independent codes, forbids us from having high-rate codes. One can readily see that if multiple codes are run in parallel, and we treat them as one larger code, the larger code's distance is at most the minimum distance from one of the subcodes. Thus, asymptotically good codes cannot ``split'' into subcodes like this\footnote{Or at least, they must contain only asymptotically good sub-codes which do not split.}.

We also notice that if we take any CSS code and apply any circuit of single-qubit gates, we will obtain another code with the same distance and rate. However, this new code may not be easy to work with and it may not have any of its own efficient fault-tolerant operations. Thus, we further require that our circuits preserve the original code space. 

Ultimately, we are considering operators that preserve the code space, for CSS codes that do not split. 

\subsection*{Results}
We show a series of impossibility results for addressable gates under the given restrictions.

We start with $1$-local Clifford circuits: circuits made only from single-qubit Clifford gates. Since these have well-known commutation relations with Paulis, our main technique is to apply those relations to stabilizers and logical operators and ask when the output is consistent with the codespace and the desired logical action. We conclude that for any non-splitting code:
\begin{itemize}
    \item Applying $1$-local Clifford circuits which preserve the code cannot apply $H$, $SH$, or $HS$ to a strict subset of physical qubits (\Cref{prop:no-partial-from-H-type}).
    \item  No $1$-local Clifford circuit can enact an addressable logical $H$, $SH$, $HS$, or $\CNOT$ gate (\Cref{prop:no-h-cnot}).
    \item If the code does not admit any logical identities made of 1-local circuits using only $S$ or $SHS$ gates, then any 1-local circuit made of gates from the third level of the Clifford hierarchy and beyond does not preserve the codespace (\Cref{cor:P-or-nothing-Clifford-hierarchy}).
\end{itemize}

As introduced in \cite{markus}, automorphisms of the code can be used to implement logical operations using permutations. Our second set of results uses this principle, just as in \cite{markus} or \cite{photonic2025swapCNOT} which uses permutations to construct addressable Clifford gates. We show that the number of permutations which can preserve the codespace \emph{and} produce distinct logical actions is, asymptotically, quite limited: for codes with rates above $\frac{1}{3}$ and distance at least 3, it is less than $k!$ when the code has $k$ logical qubits.

With this idea, we can make several conclusions about CSS codes with distance at least 3:
\begin{itemize}
    \item Circuits made from only SWAP gates (or any permutations of physical qubits) cannot implement all logical permutations on a code with rate asymptotically greater than $1/3$. They also cannot implement $\CNOT(i,j)$ for all pairs of qubits $(i,j)$ on codes with any constant rate. Finally, for any $2$-qubit gate $G$ such families of circuits cannot implement $G(i,j)$ for all pairs of qubits $(i,j)$ on codes with rates greater than $3/4$.
    \item Circuits between two codes, made by applying CNOT (resp. CZ) to all physical qubits, cannot implement all depth-$1$ logical CNOT (resp. CZ) circuits on codes with rate asymptotically greater than $\frac{1}{3}$ (\Cref{prop:no-global-cnot,prop:no-global-CZ}).
\end{itemize}
While the second result seems more restrictive, since the circuit must act on all physical qubits, the circuits for transversal CCZ gates from \cite{hvwz2025ccz} satisfy this property, as would a CZ built from their methods. However, they do not aim for parallel addressability.

While a rate of $\frac{1}{3}$ (and especially $\frac{3}{4}$) is too high to restrict most codes, this bound comes from a rather loose estimate on the bounds of certain code automorphisms (\Cref{upp nbr permutation}). Improved counting arguments would immediately give a stronger restriction on the code's rate.

\subsection*{Related Work}
While considerable work has been done to find valid transversal implementations and study their actions on the code, very little has been done on the addressability problem until recently. Addressability is mentioned when a transversal gate allows it, such as in \cite{Zhu2023NonCliffordAP,markus}, but is rarely a goal itself. Previous works finding addressable gates like \cite{patra2024targetedcliffordlogicalgates,Quintavalle2023partitioningqubits} were not fault-tolerant. Closely related to our study, \cite{synthesis_clifford} shows that on stabilizer codes, any logical Clifford can be implemented using a physical Clifford circuit, thus achieving addressability for logical Cliffords.
Both \cite{lin2024transversalnoncliffordgatesquantum} and \cite{hvwz2025ccz} produce codes with addressable CCZ gates. While \cite{lin2024transversalnoncliffordgatesquantum} is able to address disjoint triples of logical qubits, \cite{hvwz2025ccz} is able to address any triple of logical qubits. In \cite{hvwz2025ccz} they seek many of the same goals as we do, but with constructive results: their gates are fault tolerant, their codes can achieve constant rates, and they can address CCZ gates to arbitrary triples of logical qubits. Though, we considered an even more stringent requirement: can one address arbitrary disjoint subsets of logical qubits \emph{simultaneously}?

The pre-eminent impossibility result for computing on encoded data is the Eastin-Knill theorem~\cite{eastin-knill}, forbidding a set of transversal gates from being universal. We pursue even stronger impossibility results, similar to \cite{PRL:BraKon2013}, who prove that constant-depth operations on topological codes in $D$ dimensions can only implement gates from up to the $D$th level of the Clifford hierarchy. As it turns out, high-rate codes are necessarily non-local \cite{PRL:BasKri2022}, hence \cite{PRL:BraKon2013} will not provide the techniques needed to tackle these codes. 

\cite{PRX:OCoKubYod2018} gives bounds in terms of the ``disjointness'' of the logical Pauli operators, and also prove that any transversal gate on a stabilizer code must be in the Clifford hierarchy. Combined with our \Cref{cor:P-or-nothing-Clifford-hierarchy}, the $S$ gate acts like a load-bearing gate: if it is not addressable, little else can be. 

One major departure in our approach from the impossibility results in \cite{PRL:BasKri2022,PRX:OCoKubYod2018} is that these papers consider arbitrary physical implementations (though perhaps limited by depth), whereas we focus on particular physical implementations of the addressable gates. This makes our results less general, but they can be more powerful when they do apply. Focusing on specific physical gates gives us much more control over their effects on the stabilizers. So far this perspective seems limited to \emph{constructive} results like \cite{markus,hvwz2025ccz,lin2024transversalnoncliffordgatesquantum,patra2024targetedcliffordlogicalgates,Quintavalle2023partitioningqubits,Zhu2023NonCliffordAP}; we expect future ``pessimistic'' research from this perspective could yield more impossibility results.

A large amount of (nearly) concurrent work has appeared on this problem recently. Closest to our results is a no-go theorem that $k-1$-fold transversal Clifford circuits cannot implement the full logical Clifford group on a code with $k$ or more logical qubits~\cite{ARXIV:ChaGot26}. They use entirely different techniques and the result is more general, but also less specific: we prove no-go theorems for specific logical Cliffords. A complementary result constructs a quantum Reed-Muller where fold-transversal gates \emph{can} implement the full logical Clifford group, by using a more permissive notion of ``fold-transversal''~\cite{ARXIV:TanChaTak26}. One of our restrictions was requiring that the physical circuit preserves the code perfectly. \cite{ARXIV:TanChaTak26} work allows physical Clifford gates to modify the code, but then they restore the original code with a qubit permutation.

Separately, a large number of constructive results have appeared recently for performing logical operations on high-rate codes, using higher-depth circuits~\cite{photonic2025swapCNOT,PRXQ:SwaJocYog26}, teleporting gates from other codes~\cite{ARXIV:PGPP25,PRX:XZZB+25}, code switching~\cite{ARXIV:LiPreXu25,ARXIV:THLGH25}, lattice surgery~\cite{ARXIV:CHRY25}, or combinations of these techniques~\cite{ARXIV:XZBC+25}. Most of these add some overhead in space or time compared to the small, unitary circuits that we consider; for example, \cite{ARXIV:PGPP25} provide a technique to teleport any Clifford gate onto a specific code using an auxiliary Bacon-Shor code, but this requires extra qubits for the auxiliary code and it is unclear whether multiple logical qubits can be addressed in parallel. To answer whether this is a practical improvement on other codes, one needs to do a full accounting of these overheads (notably, \cite{PRX:XZZB+25} do exactly this analysis for their code by constructing addition circuits).

Our perspective was the reverse: we take several restrictive assumptions on the technique to force it to be efficient (one-local, unitary, Clifford, etc.), then ask whether such an efficient operation can do something useful.

\subsection*{Conclusions}
Ideally, we would answer the question of addressability, by either giving a method to perform addressable gates on high-performance codes, or definitively proving that this is impossible. Instead, we have only \emph{some} impossibility results. However, our results suggest what routes will be necessary if addressable gates are possible, highlight new proof techniques for considering these problems, and emphasize some of the restrictions we might need in considering the addressability problem.

For example, \cite{patra2024targetedcliffordlogicalgates} and \cite{Quintavalle2023partitioningqubits} seem to contradict our results by providing an addressable $H$ gate. However, as these papers point out themselves, their techniques do not necessarily preserve distance. \cite{patra2024targetedcliffordlogicalgates} involves enacting a linear transformation on the stabilizer vectors by applying a physical CNOT from each physical qubit in the code to \emph{an unprotected} auxiliary qubit. This means any phase error on this qubit will propagate up into the code. Hence, distance-preserving techniques remain an important consideration.

One easy fix might be to encode the auxiliary qubit in a different code (say, a surface code). In \cite{ARXIV:PGPP25} they use a Bacon-Shor code to teleport gates into a qLDPC code using targeted CNOTs and measurements. This does not seem to allow \emph{simultaneous} addressability.

Our impossibility results on $1$-local Clifford addressability shows that the algorithms of \cite{synthesis_clifford} or \cite{ARXIV:TanChaTak26} cannot, in general, be refined to  output $1$-local Clifford circuits. Extending our analysis to bounded-depth Clifford circuits would yield a tighter understanding of how optimal their algorithm is.

In concurrent work, \cite{hvwz2025ccz} uses a depth-1 physical CCZ circuit for both ``intra-code'' and ``inter-code'' addressable logical CCZ gates. They prove a constructive result for a constant code rate, whereas our CZ impossibility results apply to codes with higher rates than they construct. Despite the similarities between our impossibility results and their constructions, our results do not apply to their codes. The main difference is that our results forbid what we call ``parallel addressability'' (\Cref{def:parallel-addressability}), where if two logical gates act on disjoint sets of logical qubits, we can apply both simultaneously. Their construction has some ability to do this, but not completely. 

In another concurrent work, \cite{photonic2025swapCNOT} constructs codes with fault-tolerant circuits for addressable Clifford gates constructed from permutation automorphisms. Our results (\Cref{coro style}) show that this method can only work for codes with asymptotically low rates (in $o(1)$), and our upper bound is not far from the rate of their codes.

One method to escape our restrictions would be to allow the physical circuit to modify the code. For example, maybe there is a family of codes that can all be reached from each other by depth-1 CNOT circuits. Indeed, \cite{ARXIV:TanChaTak26} find quantum Reed-Muller codes that allow this. In general, this is a special structure, so we assumed it did not exist, but as \cite{ARXIV:TanChaTak26} shows this could be a productive avenue for constructure results.

Overall, we hope our results motivate more consideration of addressability and that our techniques can be taken further, either for constructive results or impossibility theorems. This work also shows that we should not take it for granted that, because one quantum error-correcting code is able to encode logical qubits more efficiently than another one, it will be overall more efficient for computation.

\section{Background}
\subsection{Notation}

Let $\mathcal{P}_n$ be the set of $n$-qubit Pauli operators. 

We define the Clifford hierarchy inductively as follows: $\mathcal{C}_n^1=\mathcal{P}_n$, and for $k>1$, 
\begin{equation}
    \mathcal{C}_n^k = \{U\in\mathcal{U}_{2^n} | U\mathcal{P}_nU^\dagger\subseteq \mathcal{C}_{k-1}^n\}.\label{eq:clifford-hierarchy}
\end{equation}

We call $\mathcal{C}_n^k$ the $k$th \emph{level} of the Clifford hierarchy, and we simply call $\mathcal{C}_n^2$ the Clifford gates. 

A $k$-\emph{local} circuit is a circuit composed of gates such that each gate acts on at most $k$ qubits. We say a circuit is \emph{global} for a set of qubits if it applies a non-identity gate to all qubits in the set.

We let $\llbracket n\rrbracket$ denote the set $\{1,2,\dots, n\}$.

For a vector $a\in \F_2^n$ and a single-qubit gate $G$, we let $G^a$ denote the operator $\otimes_{i:a_i=1}G_i$, where $G_i$ is $G$ applied to qubit $i$.

For a vector $a\in \F_2^n$, and a set $h\subseteq \llbracket n\rrbracket$, we will sometimes use $a\cap h$ to denote a vector in $\F_2^n$ such that $(a\cap h)_i=1$ if and only if $a_i=1$ and $i\in h$.

A quantum code on $n$ physical qubits, encoding $k$ logical qubits, with distance $d$, is denoted as a $\llbracket n,k,d\rrbracket$ code.

\subsection{CSS Codes}\label{sec:CSS}
We will work entirely with CSS codes~\cite{CSS_1}. A CSS code is constructed from two classical codes $\Code_X,\Code_Z\subseteq \F_2^n$ such that $\Code_X^\perp \subseteq \Code_Z$. Let $H_X$ and $H_Z$ be the parity check matrices of $\Code_X$ and $\Code_Z$.

We will let $\mathcal{S}_X = \{X^a | a\in \Code_X^\perp\}$ and $\mathcal{S}_Z = \{Z^b | b\in \Code_Z^\perp \}$, the stabilizers are generated by the parity check matrices $H_X,H_Z$. We define the code $\CSS(\Code_X,\Code_Z)$ to be the set of quantum states in the $+1$ eigenspace of all operators in $\Stab = \Stab_X \cup \Stab_Z$. 

The orthogonality condition implies that all operators in $\Stab_X$ commute with all operators in $\Stab_Z$.

Using generalized Paulis for the stabilizers, CSS codes can also be defined on qudits from codes on $\F_q^n$ \cite{rains1999quditcodes}.

Starting from section 3, we will work exclusively with the stabilizer spaces of CSS codes rather than the classical codes themselves. To simplify, we introduce the notation $\overline{\CSS}(A,B)$ to denote a CSS code whose $X$-stabilizers are generated by elements of $A$ and $Z$-stabilizers by elements of $B$. Concretely, given a conventional $\CSS(\Code_X, \Code_Z)$ code, we set $A = \Code_X^\perp$ and $B = \Code_Z^\perp$, so that the orthogonality condition $\Code_X^\perp \subseteq \Code_Z$ becomes simply $A \subseteq B^\perp$. This shift in perspective, from classical codes to their duals, is natural given that our constructions and proofs operate directly on stabilizer spaces, and avoids the notational overhead of carrying perpendicular superscripts throughout.

We define a logical operator to be any operator which preserves the codespace. If an operator acts as the identity on the codespace, we call it
a logical identity. In particular, logical operators are the set of operators preserving logical identities by conjugation.

\begin{proposition}\label{cond_valid_unitary}
    $L$ is a logical operator for a code $\Code$ if and only if $LI(\Code)L^\dagger \subseteq I(\Code)$ where $I(\Code)$ is the set of logical identities for the code $\Code$.
\end{proposition}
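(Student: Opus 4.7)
The plan is to prove both directions of the biconditional by exploiting the fact that the orthogonal projector $\Pi$ onto the codespace is the canonical element of $I(\Code)$: trivially $\Pi|\psi\rangle = |\psi\rangle$ for every $|\psi\rangle \in \Code$, so $\Pi \in I(\Code)$, and one can plug $\Pi$ into the hypothesis in the backward direction to recover information about where $L$ sends the codespace. Throughout I would take $L$ to be unitary, the setting relevant to physical circuits implementing logical gates.

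For the forward direction, I would first argue that $L^\dagger$ also preserves $\Code$ whenever $L$ does: since $\Code$ is finite-dimensional and $L$ restricted to $\Code$ is an injective linear map into $\Code$, it is bijective on $\Code$, and unitarity makes its inverse on $\Code$ equal to $L^\dagger$ restricted to $\Code$. Then for any $M \in I(\Code)$ and any $|\psi\rangle \in \Code$, I would compute
\begin{equation*}
LML^\dagger|\psi\rangle \;=\; LM\bigl(L^\dagger|\psi\rangle\bigr) \;=\; L\bigl(L^\dagger|\psi\rangle\bigr) \;=\; |\psi\rangle,
\end{equation*}
using that $L^\dagger|\psi\rangle \in \Code$ and $M$ acts as identity there. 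Hence $LML^\dagger \in I(\Code)$, giving $LI(\Code)L^\dagger \subseteq I(\Code)$.

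For the backward direction, apply the hypothesis to $\Pi$ itself to obtain $L\Pi L^\dagger \in I(\Code)$. Since $L$ is unitary, $L\Pi L^\dagger$ is the orthogonal projector onto $L\Code$. For any $|\psi\rangle \in \Code$, membership in $I(\Code)$ gives $L\Pi L^\dagger|\psi\rangle = |\psi\rangle$, which puts $|\psi\rangle$ in the range $L\Code$ of this projector. Thus $\Code \subseteq L\Code$, and a dimension count (both subspaces have the same finite dimension, since $L$ is unitary) upgrades this to $L\Code = \Code$, i.e., $L$ preserves the codespace.

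The main conceptual step — and really the only point that takes any thought — is noticing that $\Pi$ itself qualifies as a logical identity and using it as a probe in the backward direction; the forward direction is then just the observation that unitarity plus finite-dimensionality forces $L^\dagger$ to preserve $\Code$ as well. No structure specific to CSS codes, stabilizers, or the classical ingredients $\Code_X, \Code_Z$ enters the argument at this level of generality.
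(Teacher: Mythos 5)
Your proof is correct, and it genuinely diverges from the paper's argument in the backward direction. For the forward direction you proceed essentially as the paper does: both arguments boil down to observing that unitarity forces $L^\dagger$ to send $\Code$ into $\Code$ (the paper leaves this implicit in writing $sL^\dagger\ket{\phi}=L^\dagger\ket{\phi}$, which silently uses $L^\dagger\ket{\phi}\in\Code$; you justify it cleanly via bijectivity of $L\vert_\Code$), after which $LML^\dagger$ fixes every codeword by inspection. Where the routes split is the converse. The paper takes an arbitrary stabilizer $s$, produces $s'\in I(\Code)$ with $sL=Ls'$, concludes $sL\ket{\psi}=L\ket{\psi}$, and then invokes the fact that $\Code$ is the common $+1$-eigenspace of the stabilizer group to deduce $L\ket{\psi}\in\Code$. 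You instead probe with the codespace projector $\Pi$, note that unitarity makes $L\Pi L^\dagger$ the orthogonal projector onto $L\Code$, read off $\Code\subseteq L\Code$ from $L\Pi L^\dagger\ket{\psi}=\ket{\psi}$, and close with a dimension count. Your version is more elementary and, as you note, entirely stabilizer-agnostic: it establishes the claim for any subspace code with no reference to Paulis or to the CSS ingredients, which makes the logical structure of the equivalence clearer. The paper's version is more concrete and dovetails naturally with the stabilizer computations used throughout the rest of the section, but both are valid; your observation that $\Pi$ is the canonical element of $I(\Code)$ and works as a universal probe is a nice simplification.
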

\begin{proof}
    Let $\ket{\psi}$ be a codeword and $L$ a logical operator. Then $L\ket{\psi}=\ket{\phi}$ for $\ket{\phi}\in \Code$. Let $s\in I(C)$. Then $sL^\dagger\ket{\phi}=L^\dagger\ket{\phi}$, so $LsL^\dagger\ket{\phi}=\ket{\phi}$. Thus, $LsL^\dagger$ is a logical identity.

    Conversely, if $LI(C)L^\dagger\subseteq I(C)$ for an operator $L$, then for any $s\in I(C)$, there is $s'\in I(C)$ such that $sL=Ls'$, so $sL\ket{\psi}=Ls'\ket{\psi}=L\ket{\psi}$ for any $\ket{\psi}\in\Code$. Since the stabilizers for the code are included in $I(C)$, this implies $L\ket{\psi}\in\Code$, so $L$ is a logical operator.

\end{proof}

The \emph{normalizer} of a group $G$ contained in a group $E$ is the set of all $h\in E$ such that $hGh^{-1}\subseteq G$, and is denoted $N_E(G)$ or just $N(G)$ if $E$ is clear from context. 

For a stabilizer code, $N_{\mathcal{P}_n}(\mathcal{S})$ contains all logical Pauli operators on the code. More precisely for CSS codes, $N_{\mathcal{P}_n}(\mathcal{S}_X)$ are all the combinations of $X$ operators and logical Pauli-$Z$ operators, and $N_{\mathcal{P}_n}(\mathcal{S}_Z)$ are all the combinations of $Z$ operators and logical Pauli-$X$ operators. Quotienting by the stabilizers gives distinct logical Pauli operators as cosets of this space. 

From this we can prove standard results about stabilizer codes. For example, each logical Pauli-$Z$ string in $N(\Stab_X)$ corresponds to a vector in $\Code_X$ since the stabilizers in $\Stab_X$ correspond to vectors in $\Code_X^\perp$. Thus, in order to commute with all $X$ stabilizers, the logical Pauli-$Z$ string must be in $\Code_X$. Furthermore, if $\Code_X,\Code_Z$ have dimension $k_X,k_Z$ then $\Code_Z^\perp$ has dimension $n-k_Z$. The logical Pauli-$Z$ operators are the cosets of $N(\Stab_X)/\Stab_Z$, which thus corresponds to $\Code_X/\Code_Z^\perp$ and has dimension $k_X+k_Z-n$. Each generator of $N(\Stab_X)/\Stab_Z$ corresponds to a logical operator and thus a logical qubit, so the code has $k_X+k_Z-n$ logical qubits, a fact we will use throughout this paper.

\subsection{Code Rates}
The \emph{rate} of a code is the number of logical qubits $k$ divided by the number of physical qubits $n$.

\begin{proposition}\label{prop:code-rates}
    Let $\Code=\CSS(\Code_X,\Code_Z)$, and let $\rho',\rho''$ be the maximum and minimum of the rates of the classical codes of $\Code_X$ and $\Code_Z$. Letting $\rho$ be the rate of $\Code$, we have that $\rho=\rho'+\rho''-1$ and $2\rho''-1\leq \rho\leq 2\rho'-1$.
\end{proposition}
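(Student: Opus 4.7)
The plan is to reduce everything to the dimension count stated at the end of Section~2.2 and then translate it into rates. First, I would invoke $k = n - s_x - s_z$ and rewrite the quantum rate as
\begin{equation*}
\rho = \frac{k}{n} = 1 - \frac{s_x + s_z}{n}.
\end{equation*}

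Next, I would identify the classical rates $r_1, r_2$ of $\Code_1, \Code_2$ with the stabilizer counts. Under the CSS convention in which each $\Code_i$ is the ``data'' code whose dual supplies one type of stabilizer generators, we have $s_x = n(1-r_1)$ and $s_z = n(1-r_2)$ up to labelling. Substituting yields
\begin{equation*}
\rho = 1 - (1-r_1) - (1-r_2) = r_1 + r_2 - 1 = \rho' + \rho'' - 1,
\end{equation*}
which is the first claim. The bounds then follow by monotonicity: since $\rho'' \leq r_1, r_2 \leq \rho'$, summing gives $2\rho'' \leq r_1 + r_2 \leq 2\rho'$, and subtracting $1$ yields $2\rho'' - 1 \leq \rho \leq 2\rho' - 1$.

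There is essentially no obstacle here; the whole proof is a dimension count followed by arithmetic. The only subtlety is aligning the CSS-construction convention so that the orthogonality relation $\Code_2^\perp \subseteq \Code_1$ lines up with $s_x + s_z = n(2 - r_1 - r_2)$. This same orthogonality guarantees $r_1 + r_2 \geq 1$, which is consistent with $\rho \geq 0$. Once the identification is fixed, the formula and the bounds fall out in a few lines.
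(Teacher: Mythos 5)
Your proof is correct and takes essentially the same route as the paper: write $\rho = k/n$ as a dimension count, express $k$ in terms of the duals' dimensions so that $\rho = r_1 + r_2 - 1$, and then bound $r_1 + r_2$ between $2\rho''$ and $2\rho'$. Your remark about aligning the CSS convention is apt — the paper's own proof tacitly treats $\Code_1,\Code_2$ as the ``data'' codes whose duals give the stabilizers, which is the reading you adopted.
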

\begin{proof}
    The rate of the CSS code is given by is $\rho=\frac{k}{n}=\frac{k_X+k_Z-n}{n}$ where $k_X=\dim(\Code_X)$ and $k_Z=\dim(\Code_Z)$. Using that $\rho_X = \frac{k_X}{n}$ and $\rho_Z = \frac{k_Z}{n}$ we obtain $\rho=\rho_X+\rho_Z-1$. This directly gives $2\rho''-1\leq \rho\leq 2\rho'-1$.

\end{proof}

\subsection{Transversality}
Informally, a ``transversal'' gate is any gate which efficiently implements a logical qubit using physical gates, and typically refers to the case where we apply some physical gate $U$ to all physical qubits and obtain the action of $U$ on the logical qubits. However, transversality has a more general definition (\cite{PRX:OCoKubYod2018})

\begin{definition}[Transversality]
Let $\mathcal{Q}=(Q_i)_{i\in I}$ be a partition of the qubits in a code. We say that a gate $U$ is transversal with respect to $\mathcal{Q}$ if it can be decomposed as $U=\otimes_{i\in I}U_i$ where $U_i$ acts only on $Q_i$.

If $\mathcal{Q}$ is not mentioned explicitly, it is taken to be $Q_i=\{i\}$ for $i=1$ to $n$, or for multiple qubit gates with $p$ blocks of a code, it is taken to be $Q_i=\{i_1,i_2,\dots,i_p\}$ for $i=1$ to $n$.
\end{definition}

\section{Addressability}
We have previously defined a logical operator as an operator which preserves the codespace. We can now ask what action it has. First we recall that any operator $U$ can be written as a linear combination of Paulis, i.e.,
\begin{equation}
    U = \sum_{U_i\in\mathcal{P}_k}\alpha_i U_i
\end{equation}
for coefficients $\alpha_i \in \mathbb{C}$. Thus, we say that a physical circuit $G$ has the logical action of $U$ on a code $\Code$ if $G$ is a logical operator on $\Code$, and for any $\ket{\psi}$ in the code
\begin{equation}
    G\ket{\psi}=\sum_{U_i\in\mathcal{P}_k}\alpha_i\overline{U_i}\ket{\psi}
\end{equation}
where $\overline{U}_i$ is a logical operator for the Pauli $U_i$.

\Cref{fig:targeted-circuit} visualizes such a circuit.

\begin{figure*}[t]
\centering
        \includegraphics[width=0.7\textwidth]{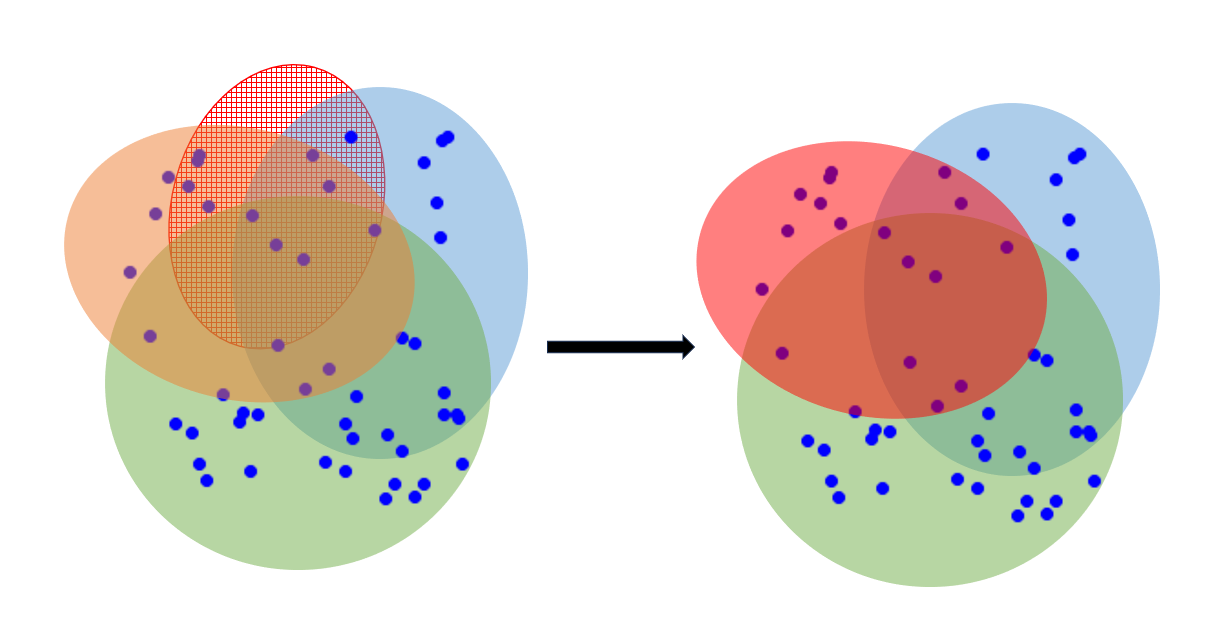}
        \caption{As in \Cref{fig:image} we visualize a code with 3 logical qubits by the solid colored regions (with the coloring denoting the logical state of that qubit), and a physical circuit by the red hatched region. The physical circuit targeted the orange qubit: despite acting on some of the physical qubits in the blue logical qubit, the blue qubit does not change its logical state, while the state of the orange logical qubit changes from orange to red.}
        \label{fig:targeted-circuit}
\end{figure*}

\begin{definition}[Addressability]

    Let $U$ be a unitary on $m$ qubits, and $\mathcal{F}$ be a family of quantum circuits. We say that $U$ is $\mathcal{F}$-\emph{addressable} if, for any ordered tuple of $m$ logical qubits $t$, there is a circuit in $\mathcal{F}$ implementing $\Bar{U}_t$ : the logical action $\Bar{U}$ on the tuple $t$.

\end{definition}

\Cref{fig:addressable} shows what it would mean to be addressable.
\begin{figure*}[t]
\centering
        \includegraphics[width=0.9\textwidth]{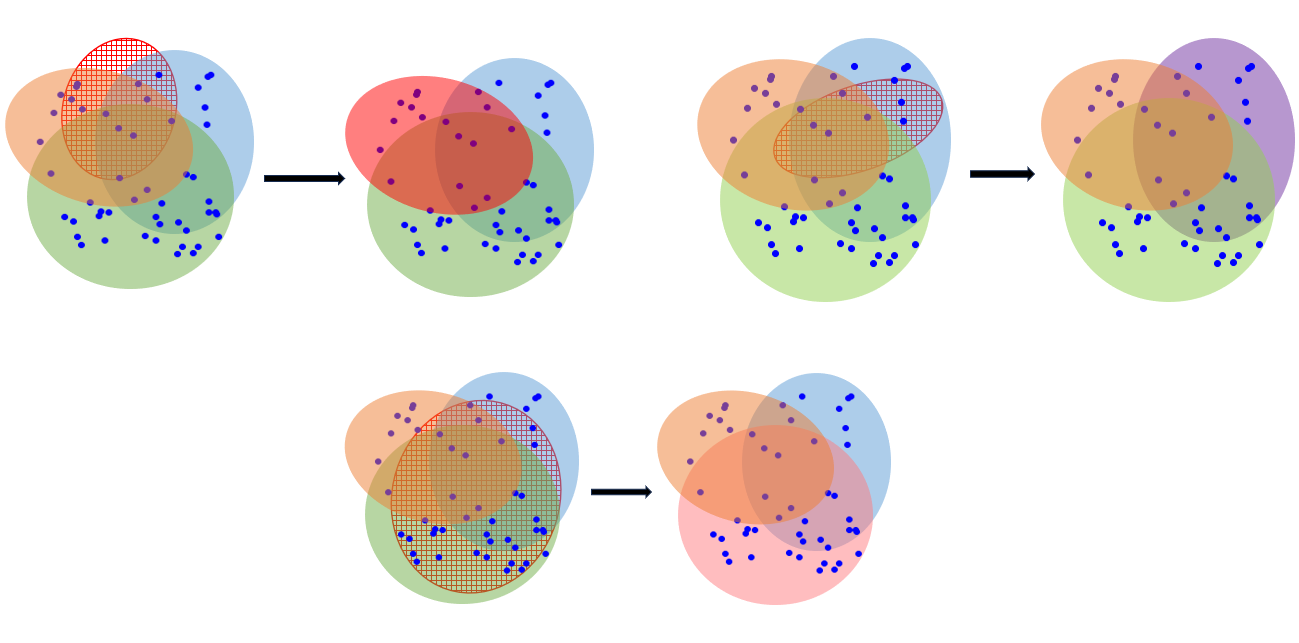}
        \caption{As in \Cref{fig:targeted-circuit} we visualize a code with 3 logical qubits by the solid colored regions (with the coloring denoting the logical state of that qubit), and a physical circuit by the red hatched region. The physical circuits target only one logical qubit each time and act as some unitary that sends 'orange' to 'red', 'blue' to 'purple' and 'green' to 'pink'. In this case, this unitary is addressable on the code.}
        \label{fig:addressable}
\end{figure*}

The restriction to a circuit family $\mathcal{F}$ ensures we do not capture the trivial circuits of decode-apply-encode. Some more useful families of circuits might be:

\begin{enumerate}
    \item Circuits made from Clifford gates (\Cref{sec:cliffords})\label{property:clifford}
    \item Circuits with depth less than $n$\label{property:depth}
    \item Circuits with fewer than $n$ gates\label{property:gates}
    \item Circuits made from SWAPs (\Cref{sec:isomorphisms})\label{property:swaps}
    \item Circuits acting on all physical qubits (\Cref{sec:automorphism-cnots})\label{property:global}
    \item $k$-fold transversal circuits, where the qubits can be partitioned into sets of size at most $k$ where the support of each gate is restricted to one set in the partition. (\cite{ARXIV:ChaGot26})\label{property:kfold}
\end{enumerate}

Some of these properties are not closed under composition (\ref{property:depth},\ref{property:gates}), while others (\ref{property:clifford},\ref{property:swaps},\ref{property:global},\ref{property:kfold}) are closed. To see why this matters, suppose we had an addressable $T$ gate from depth-2 circuits, and we wanted to apply $T$ gates to logical qubits 1 and 2. At the logical level, these gates \emph{should} commute with each other and we should be able to apply them in parallel, but there is no guarantee that the physical circuit will still have depth 2. Thus, we give a stronger definition:

\begin{definition}[Parallel Addressability]
    Let $U$ be a $m$-qubit unitary and $\mathcal{F}$ a family of circuits. We say that $U$ is $\mathcal{F}$-\emph{parallel addressable} on $\Code$ if, for any set $I$ of \emph{disjoint} ordered tuples of $m$ logical qubits, there is a circuit in $\mathcal{F}$ which has the logical action $\Bar{U}$ applied to all tuples in $I$.
\end{definition}

It is easy to see that these definitions are equivalent if $\mathcal{F}$ is closed under composition:
\begin{lemma}
    Let $U$ be a $m$-qubit unitary and $\mathcal{F}$ a family of circuits which is closed under composition. Then $U$ is $\mathcal{F}$-addressable if and only if $U$ is $\mathcal{F}$-parallel addressable.
\end{lemma}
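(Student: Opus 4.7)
The plan is to prove the two implications separately. The reverse direction, that $P$-parallel addressability implies $P$-addressability, is immediate: given any single $p$-tuple $t$, apply the definition of parallel addressability to the singleton family $I = \{t\}$ to obtain a circuit in $P$ whose logical action is $\bar{U}_t$.

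For the forward direction, I would argue as follows. Suppose $U$ is $P$-addressable and let $I = \{t_1, t_2, \ldots, t_m\}$ be any set of disjoint ordered $p$-tuples of logical qubits. By addressability, for each $i \in \llbracket m \rrbracket$ there is a circuit $C_i \in P$ whose logical action on $\Code$ is $\bar{U}_{t_i}$. Since $P$ is closed under composition, the circuit $C = C_m C_{m-1} \cdots C_1$ lies in $P$. The plan is then to check that $C$ realizes the joint parallel action.

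The key step is to verify that the logical action of $C$ equals the logical action $\prod_{i=1}^{m} \bar{U}_{t_i}$ applied to the tuples in $I$. Each $C_i$ is by hypothesis a logical operator, so the composition $C$ is also a logical operator (this is immediate from \Cref{cond_valid_unitary}, since the set of logical operators is closed under products). For any codeword $\ket{\psi}$, we can expand $C_1 \ket{\psi}$ as a linear combination of $\overline{P_j} \ket{\psi}$ where $P_j$ are $p$-qubit Paulis supported on the coordinates of $t_1$; then apply $C_2$ and use that its logical action is $\bar{U}_{t_2}$, which acts on the disjoint coordinates of $t_2$ and therefore commutes with each $\overline{P_j}$ at the logical level. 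Iterating this argument across all $m$ circuits shows that $C \ket{\psi}$ equals the result of applying $\bar{U}$ to each tuple in $I$ simultaneously.

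The main (minor) obstacle is purely notational: one has to articulate carefully that ``the logical action of $C_j$ is $\bar{U}_{t_j}$'' means the expansion in terms of logical Paulis uses only Paulis supported on the logical qubits in $t_j$, so that commutativity with operators supported on disjoint tuples $t_i$ holds exactly at the logical level (even if the physical circuits $C_i$ and $C_j$ do not commute as physical operators). Once this is made precise, the composition yields the parallel action and the lemma follows.
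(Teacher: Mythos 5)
Your proof is correct and takes essentially the same route as the paper: treat the reverse direction as immediate by taking singleton tuple sets, and for the forward direction compose the single-tuple circuits and invoke closure of $P$ under composition. The paper's proof is terser and simply asserts that the composition has the required logical action, whereas you spell out the logical-level commutativity argument for disjoint tuples; this is a harmless (and arguably helpful) elaboration, not a different approach.
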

\begin{proof}
    The direction parallel-addressability-to-addressability is implied by definition. Conversely, if $U$ is $\mathcal{F}$-addressable, then for any disjoint set of $m$-tuples of logical qubits $\{I_k\}$, we can find circuits in $\mathcal{F}$ for each tuple. Composing these together will have the required logical action, and the composed circuit will also be in $\mathcal{F}$.
\end{proof}

We also give a restricted definition of addressability to capture cases where we may be able to target \emph{some} subsets of the logical qubits, but not all of them.

\begin{definition}[Partial Addressability]\label{def:parallel-addressability}
    Let $U$ be a $m$-qubit unitary and $\mathcal{F}$ a set of circuits. We say that $U$ is $\mathcal{F}$-\emph{partially addressable} on $\Code$ if there exists a set $I$ of disjoint ordered tuples of $m$ logical qubits ($I\neq \emptyset,\llbracket k\rrbracket$) and a quantum circuit in $\mathcal{F}$ that has the logical action $\Bar{U}$ applied to all tuples in $I$.
\end{definition}

\begin{example}
    \cref{fig:targeted-circuit} shows that $U$ is partially addressable on this code since there is a targeting circuit implementing $U$ on $I= $\{ (Orange logical qubit) \}.   
\end{example}

Finally, we make a brief note about basis. The definition of a logical action requires a particular choice of ``basis'', i.e., which logical operators in $N(\mathcal{S})$ represent which logical Pauli operator. A different choice would change the logical action of an otherwise addressable gate; thus, our definitions are basis-dependent.

However, we do not see this as a limitation: firstly, some of our results forbid \emph{any} logical action from certain circuit families, which is inherently a basis-independent result. Second, if we are allowed to modify the basis of a code, any gate becomes possible. For example, instead of performing a logical $H$ gate, we could re-define the basis of the code to swap those $X$ and $Z$ stabilizers. This would require modifying all future gates, equivalent to commuting the $H$ through all subsequent gates in the circuit. While such compilation can be extremely useful (such as $X$ and $Z$ gates in the surface code), we cannot efficiently compute this for \emph{all} gates in a quantum circuit, or else quantum circuits would be efficiently classical simulatable! Hence we restrict to a single basis.

\subsection{Comparison to Transversality}
One can think of two dimensions of fault-tolerant operations: the utility of the operation and the practicality of the operation.

The original definition of transversality focuses on the practicality: we simply ask that a physical gate can be partitioned in such a way that it preserve's the code distance. On a case-by-case basis we make requirements on its logical action, i.e., its utility for computation.

Addressability focuses on the utility: we require a specific logical action, and we require that action on specific qubits (ideally, all of them independently). We consider its practicality afterwards, by restricting which gates we are allowed to use to make the gate.

\section{Splitting Codes}
We now consider the second main restriction. If a code $\Code$ is simply two codes $\Code_1$ and $\Code_2$ run in parallel, then one of these two codes will have parameters at least as good as $\Code$.  Hence, we want to discard such codes, as our motivation is performing computations on asymptotically good codes. If we are willing to sacrifice code performance for ease of addressable gates, the surface code is a great choice.

One of our main proof techniques is proving that the only way to have partial addressability for certain gates is if the code has this product structure. We say such a code ``splits'' (as shown in \Cref{fig:splitting-code}).

From now on, we use the $\overline{\CSS}(A,B)$ notation introduced in \cref{sec:CSS}, where $A = \Code_X^\perp$ and $B = \Code_Z^\perp$ 
represent the stabilizer spaces of the code. 

\begin{definition}[Splitting]
    Let $\Code=\overline{\CSS}(A,B)$. We say that $A$ splits on some non-empty support $h\subsetneq \{1,\dots, n\}$ if $A$ can be written as $A_1\oplus A_2$, where $h$ is the support of $A_1$.

    If $A$ and $B$ both split on the support $h$, we say that the stabilizer group $\mathcal{S}$ and the code $\Code$ split on $h$. 
\end{definition}

We see that this definition is equivalent to saying that $\Code=\Code_1\otimes\Code_2$, where $\Code_1$ and $\Code_2$ are both CSS codes. Here the tensor product is the tensor product of the quantum states that make up the code, not a tensor product of the codes in a homology sense.

\begin{example}
    Let us consider the following CSS codes where we are given the parity check matrices for the $X$ and $Z$ stabilizers as $A$ and $B$. We first show in \cref{matrix_split} the splits in $A$ and $B$ using red boxes, and then show that they have common split (which splits the overarching CSS code) in green.\\

\begin{figure*}[t]
\centering
\includegraphics{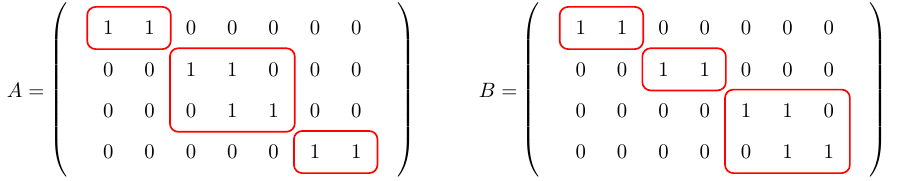}
\includegraphics{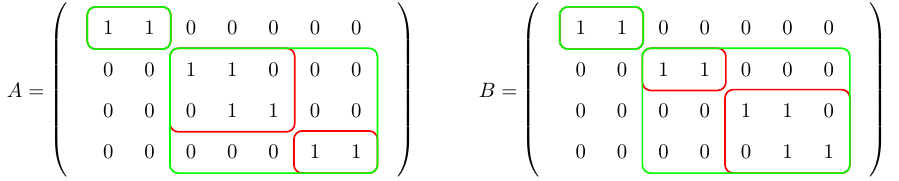}
\caption{Visualization of the splitting using matrices}
 \label{matrix_split}
\end{figure*}
\end{example}

\begin{figure}[h]
    \centering
    \begin{minipage}{0.5\textwidth}
        \centering
        \includegraphics[width=1\textwidth]{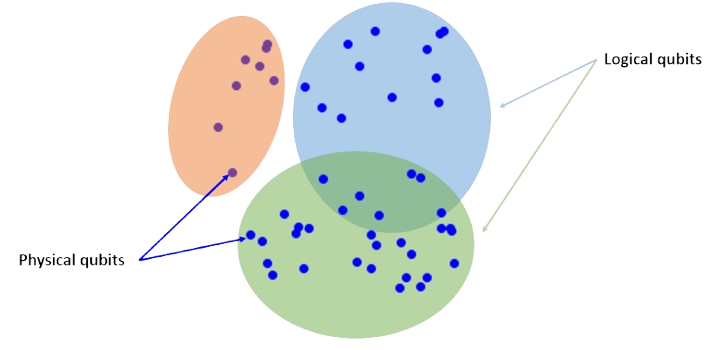}
        \caption{Visualization of a splitting code}
        \label{fig:splitting-code}
    \end{minipage}\hfill
    \begin{minipage}{0.5\textwidth}
        \centering
        \includegraphics[width=1\textwidth]{images/non_splitting_code.PNG}
        \caption{Visualization of a non-splitting code}
        \label{fig:non-splitting-code}
    \end{minipage}
\end{figure}

We provide the following proposition, whose proof is straightforward:

\begin{proposition}\label{prop:split-code-params}
    Let $\Code$ be a $\llbracket n,k,d\rrbracket$ CSS code that splits into an $\llbracket n_1,k_1,d_1\rrbracket$-code $\Code_1$ and an $\llbracket n_2,k_2,d_2\rrbracket$-code $\Code_2$. Then:
    \begin{itemize}
        \item $\frac{k}{n}=\frac{k_1+k_2}{n_1+n_2}\leq \max_{i\in\{1,2\}}\frac{k_i}{n_i}$.
        \item 
        $d=\min_{i\in\{1,2\}}d_i$
        \item 
        If $U$ is addressable on $\Code_1$ and $\Code_2$ then $U$ is addressable on $\Code$. If $U$ is partially addressable on $\Code_1$ or $\Code_2$ then $U$ is partially addressable on $\Code$.
    \end{itemize}
\end{proposition}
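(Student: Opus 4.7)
The plan is to exploit the fact that a splitting code is literally a tensor product $\Code = \Code_1 \otimes \Code_2$: the physical qubits partition into disjoint supports, the stabilizer group is generated by the (disjoint-support) stabilizers of each factor, and the codespace factors accordingly. Given this structural observation, each of the three items reduces to elementary bookkeeping.

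For the rate, I would observe that $n = n_1 + n_2$ and $k = k_1 + k_2$, so $\frac{k}{n}$ is the convex combination of $\frac{k_1}{n_1}$ and $\frac{k_2}{n_2}$ with weights $\frac{n_1}{n}$ and $\frac{n_2}{n}$; a convex combination is bounded above by the maximum of its terms, giving the stated inequality.

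For the distance, I would first establish the upper bound: any minimum-weight nontrivial logical Pauli of $\Code_i$, extended by identity on the support of the other factor, remains a nontrivial logical Pauli of $\Code$ of weight $d_i$, so $d \leq \min(d_1, d_2)$. For the lower bound, any Pauli $L$ on $n$ qubits factors as $L = L_1 \otimes L_2$ along the partition of the supports; since the stabilizer generators of each $\Code_i$ live on disjoint supports, $L$ lies in $N(S)$ iff each $L_i$ lies in $N(S_i)$, and $L$ is a stabilizer of $\Code$ iff each $L_i$ is a stabilizer of $\Code_i$. Thus if $L$ is a nontrivial logical of $\Code$, at least one $L_i$ is a nontrivial logical of $\Code_i$, forcing $|L| = |L_1| + |L_2| \geq d_i \geq \min(d_1, d_2)$.

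For the addressability claims, every logical qubit of $\Code$ belongs to exactly one factor under the tensor product decomposition. For a single-qubit $U$ that is addressable on both $\Code_1$ and $\Code_2$, an addressing circuit on the relevant factor, padded with identity on the other factor's qubits, addresses that logical qubit in $\Code$. For partial addressability, a circuit that partially addresses some tuples in $\Code_i$, extended by identity on the other factor, yields a circuit on $\Code$ whose logical action is precisely the same partial addressing on those tuples. The only real subtlety lies in the distance lower bound, where I need to argue carefully that $L = L_1 \otimes L_2$ is a nontrivial logical of $\Code$ exactly when at least one $L_i$ is a nontrivial logical of $\Code_i$ modulo its stabilizers; the rest is essentially definitional.
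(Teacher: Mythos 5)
The paper does not actually supply a proof of this proposition—it labels it as ``straightforward'' and moves on—so there is no written argument to compare yours against. Your proof is correct and is the argument one would expect: the rate bound via the convex-combination identity, the distance equality via the tensor-product factorization of Pauli operators and of the stabilizer group along the disjoint supports, and the addressability transfer via padding with identity. The one point worth acknowledging (and which the paper itself elides) is that the addressability claims implicitly assume the circuit family $P$ is closed under padding with identity wires, which holds for the families the paper actually uses but is not automatic for, e.g., the ``global'' family.
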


These results tell us that if a code splits, the subcodes cannot have worse rates. Thus, for an asymptotically good code, it must have some non-splitting, ``irreducible'' core. We also know this core must encode more than 1 logical qubit, as the rates of single-logical qubit codes cannot be arbitrarily high.

\section{Clifford Addressability}\label{sec:cliffords}
In this section we study $1$-local Clifford circuits: circuits made only from single-qubit Clifford gates. We will show that for circuits with $H$ gates, even being a logical operator at all (let alone what action it might have) implies the code splits. Hence, for good (non-splitting) CSS codes they cannot have such logical operators.

\subsection{Tools}
\begin{proposition}\label{prop:clifford logical}
    Let $U$ be a Clifford circuit and let $\mathcal{S}$ be the stabilizer group of $\Code$. If $U$ is a logical operator, then $U\mathcal{S}U^\dagger\subseteq \mathcal{S}$.
\end{proposition}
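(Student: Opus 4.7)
The plan is to chain together three facts: (i) the abstract characterization of logical operators from Proposition \ref{cond_valid_unitary}, (ii) the defining property of the Clifford group as the normalizer of the Pauli group, and (iii) a standard identification of the Paulis in $I(\Code)$ with $S$ itself.

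First I would invoke Proposition \ref{cond_valid_unitary}: since $U$ is a logical operator for $\Code$, we have $U I(\Code) U^\dagger \subseteq I(\Code)$. The stabilizer group $S$ (generated by $S_X$ and $S_Z$) is contained in $I(\Code)$ by the very definition of the codespace as the joint $+1$ eigenspace of its stabilizers. Therefore $USU^\dagger \subseteq I(\Code)$, which gives us the containment we want, but inside $I(\Code)$ rather than inside $S$.

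Next I would use that $U$ is Clifford, so conjugation by $U$ maps Pauli operators to Pauli operators (possibly up to a sign). Hence $USU^\dagger$ is a subset of $\mathcal{P}_n \cap I(\Code)$, and it suffices to show that $\mathcal{P}_n \cap I(\Code) \subseteq S$. For a Pauli $P \in I(\Code)$, $P$ fixes every codeword, so for any stabilizer $s \in S$ and codeword $|\psi\rangle$ we have $sP|\psi\rangle = s|\psi\rangle = |\psi\rangle$; if $P$ anticommuted with $s$, we would get $sP|\psi\rangle = -Ps|\psi\rangle = -|\psi\rangle$, a contradiction. So $P \in N_{\mathcal{P}_n}(S)$. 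By the coset description of $N_{\mathcal{P}_n}(S)/S$ recalled in the background, any non-trivial coset represents a non-trivial logical Pauli and therefore acts non-trivially on some codeword; since $P$ acts as the identity on every codeword, $P$ must lie in the trivial coset, i.e., $P \in S$.

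Combining these steps yields $USU^\dagger \subseteq S$. I expect the only subtle point to be the last identification $\mathcal{P}_n \cap I(\Code) = S$, and in particular the handling of global signs: an element like $-s$ with $s \in S$ is a Pauli in $N_{\mathcal{P}_n}(S)$ but acts as multiplication by $-1$ on codewords, so it is not in $I(\Code)$, hence does not intrude into the argument. Everything else is a routine application of results already stated in the excerpt.
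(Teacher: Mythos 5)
Your proof is correct and follows essentially the same route as the paper's: invoke Proposition~\ref{cond_valid_unitary} to get $UI(\Code)U^\dagger\subseteq I(\Code)$, use that a Clifford sends Paulis to Paulis, and then identify the Paulis in $I(\Code)$ with $S$. The only difference is that the paper simply \emph{asserts} this last identification (``Since $S$ corresponds to all Pauli logical identities''), whereas you supply the short argument for it — a Pauli logical identity must commute with every stabilizer, hence lies in $N_{\mathcal{P}_n}(S)$, and a non-trivial coset of $N_{\mathcal{P}_n}(S)/S$ would act non-trivially on the codespace — along with the correct observation that signed elements like $-s$ are excluded because they act as $-1$. So your write-up is a slightly more careful version of the same proof.
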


\begin{proof}
    Using \cref{cond_valid_unitary} we get that $U$ must send logical identities to logical identities. Since $U$ is Clifford, it sends Paulis to Paulis. Since $\mathcal{S}$ corresponds to all Pauli logical identities, we get $U\mathcal{S}U^\dag \subseteq \mathcal{S}$. 
\end{proof}

In particular, for a CSS code $\overline{\CSS}(A,B)$, the condition in \cref{prop:clifford logical} can be reformulated as: for all $a\in A$ and $b\in B$ (including $0$ for either)  \[ UX^aZ^bU^\dagger = \pm X^{a'}Z^{b'} \] where $a'\in A$, $b'\in B$.

\begin{definition}\label{def:equivalence of cliffords}
    Let $U,V$ two $1$-local Clifford circuits. We say that $U$ is equivalent to $V$ when there exists a Pauli string $P_{U,V}$ such that $U = VP_{U,V}$. 

\end{definition}

The following Lemma gives two useful properties which we will explain after its proof.

\begin{lemma}\label{lem:pauli-equivalence}
    Let $\Stab$ be the Pauli stabilizers of a CSS code and let $U$ be a $1$-local Clifford circuit. Then the following are equivalent: 
    \begin{enumerate}
        \item $U\Stab U^\dagger \subseteq \pm \Stab$ 
        \item $V\Stab V^\dagger \subseteq \pm \Stab$ for all $V$ equivalent to $U$
        \item There exists $V$ equivalent to $U$ such that $V\Stab V^\dagger \subseteq \Stab$.
    \end{enumerate}
\end{lemma}
\begin{proof}
    (1)$\Rightarrow $(2): Suppose $V=UP$ for some Pauli string $P$. Then for any $S\in\Stab$, $VSV^\dagger = U(PSP^\dagger)U^\dagger$. We know that $PSP^\dagger =(-1)^x S$ for some $x$, as $P$ and $S$ are both Paulis; since $U SU^\dagger = (-1)^yS$ for some $y$, we have $VSV^\dagger = (-1)^{x+y}S$, so $V\Stab V^\dagger\subseteq \pm \Stab$.

    (1)$\Rightarrow $(3): Let $\{X^{a_i}\}$ and $\{Z^{b_i}\}$ be generating sets for $\Stab$; equivalently, $\{a_i\}$ and $\{b_i\}$ are bases for $\Code_X^\perp$ and $\Code_Z^\perp$, respectively. Define $\{x_i\}$ and $\{y_i\}$ as follows: If $U X^{a_i}U^\dagger\in \Stab$, let $x_i=0$; otherwise $x_i=1$; if $U Z^{b_i}U^\dagger\in \Stab$ then $y_i=0$ and otherwise $y_i=1$.

    Given these, we can see that there are solutions $u,v\in\{0,1\}^n$ (if the code has $n$ qubits) to the linear system $a_i\cdot u = x_i$ and $b_i\cdot v=y_i$ for all $i$. Let $V = UX^{v}Z^{u}$. We have:
    \begin{align}
        V X^{a_i}V^\dagger =& UX^vZ^u X^{a_i}Z^uX^v U^\dagger\\
        =& U(-1)^{x_i}X^{a_i}U^\dagger\in \Stab
    \end{align}
    by anti-commutation of $X$ and $Z$. The same logic applies to all $Z^{b_i}$, and since these are generating sets, we conclude $V \Stab V^\dagger \subseteq \Stab$.

    (2) implies (1) trivially, which in turn means that (3) implies that all $V'$ equivalent to $V$ preserve $\Stab$ up to sign, and since $U$ is equivalent to $V$, this gives (1).
\end{proof}

\Cref{lem:pauli-equivalence} shows that if we are asking whether a Clifford circuit preserves a code, either its entire equivalence class preserves the code up to phase, or nothing equivalent to it will. Thus, we can focus on whichever equivalent Clifford is most convenient for our proofs.

Second, we see that we can ignore phase, as we can always fix up the phase with some circuit of Pauli gates (which are presumably easy to implement).

\begin{proposition}\label{prop:small-clifford-set}
    Let $U$ be a $1$-local Clifford circuit. Then there is a depth-1 circuit $V$, with gates from the set $\mathcal{C}_{S,H}:=\{I,S,HS,SHS,SH,H\}$, such that $U\mathcal{S}U^\dagger \subseteq \pm \mathcal{S}$ iff $V\mathcal{S}V^\dagger \subseteq \pm \mathcal{S}$.  

    Furthermore, writing $\Bar{U},\Bar{V}$ the logical action of these Clifford circuits, we have $\Bar{U} = \pm \Bar{V}$.
\end{proposition}
\begin{proof}
    By definition, on each qubit $U$ is a product of gates in $\mathcal{C}_{S,H}$ with Paulis. Since the gates in $\mathcal{C}_{S,H}$ preserve Paulis by conjugation, we can commute all the Paulis to the right. Thus, $U=VP$ where $P$ is a Pauli circuit and $V$ only contains gates from $\mathcal{C}_{S,H}$. By \cref{def:equivalence of cliffords}, we have that $U \equiv V$.

    Then since Paulis either commute or anti-commute, for any $X^aZ^b$, we have that 
    \begin{align}
        UX^aZ^bU^\dagger = & VPX^aZ^b P^\dagger V^\dagger\\
        = & \pm VX^aZ^bV^\dagger
    \end{align}
    This directly gives that $UX^aZ^bU^\dagger \in \pm \mathcal{S}$ if and only if $VX^aZ^bV^\dagger \in \mathcal{S}$.

    Furthermore, the logical action of $U$ is determined by how it acts on the Paulis $X^aZ^b$ with $a\in N(B)$ and $b\in N(B)$. Since $UX^aZ^bU^\dagger = \pm VX^aZ^bV^\dagger$, we have $\Bar{U} = \pm \Bar{V}$ where $\Bar{U},\Bar{V}$ denote the logical actions of $U,V$.
\end{proof}

Intuitively, \cref{prop:small-clifford-set} implies that any $1$-local Clifford operator can be decomposed into a circuit of $\mathcal{C}_{S,H}$ gates followed by a Pauli operator. Since we study stabilizer preservation up to a phase, we can ignore this Pauli factor in our analysis. Consequently, we need only consider the $\mathcal{C}_{S,H}$ component of the Clifford to derive our impossibility result.

Thanks to \Cref{lem:pauli-equivalence,prop:small-clifford-set}, \textbf{we will assume every $1$-local Clifford circuit has been compiled down to contain only $\{I,S,HS,SH,SHS,H\}$ (we take representative of the cosets), followed by Paulis.} Up to phase, those gates act on single-qubit Paulis as:

\begin{align}
I: & X\mapsto X,\hspace{-4em}& Z\mapsto Z\label{cliff:I}\\
S: & X\mapsto XZ, & Z\mapsto Z\label{cliff:P}\\
HS: & X\mapsto XZ, & Z\mapsto X\\
SH : & X\mapsto Z, & Z\mapsto XZ\\
SHS: & X\mapsto X, & Z\mapsto XZ\label{cliff:PHP}\\
H: & X\mapsto Z, & Z\mapsto X
\end{align}

\begin{definition}
    Let $U$ be a $1$-local Clifford circuit. For a set $K$ of specific single-qubit Cliffords, we define $U_K=\{i : U_i\in K\}$.
\end{definition}

\begin{example}
    Let $U=S\otimes XZH\otimes SH\otimes H$. Then this is equivalent by \Cref{prop:small-clifford-set} to $S\otimes H\otimes SH\otimes H$, and $U_H=\{2,4\}$ and $U_{S,H}=\{1,2,4\}$. To compare, $U_{S,H,SH}$ would give $\{1,2,3,4\}$.
\end{example}
This means that for any $1$-local Clifford circuit $U$, the equivalent $V$ can be written as $V=\bigotimes_{R\in \mathcal{C}_{S,H}}R^{V_R}$.

All of this leads to our main technical tool:
\begin{proposition}\label{prop:clifford-split-taxonomy}
    Let $\Code=\overline{\CSS}(A,B)$ and $U$ a Clifford circuit. If $U$ is a logical operator on $\Code$, then it is equivalent to some $V$ such that for all $a\in A$ and all $b\in B$:
    \begin{align}
        a\cap V_{S,HS,SH,H}&\in B\label{eq:cliff-ab}\\
        a\cap V_{SH,H}&\in A \label{eq:cliff-aa}\\
        b\cap V_{HS,SHS,SH,H}&\in A\label{eq:cliff-ba}\\
        b\cap V_{HS,H}&\in B\label{eq:cliff-bb}
    \end{align}
\end{proposition}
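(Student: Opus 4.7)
The plan is to conjugate the $X$ and $Z$ stabilizer generators through $V$ qubit-by-qubit and read off the four constraints directly from the requirement that each image lies back in the stabilizer group. First I would invoke \Cref{prop:small-clifford-set} to assume without loss of generality that every physical factor of $V$ belongs to $\{I, P, HP, PH, PHP, H\}$, so the analysis reduces to the six conjugation rules \eqref{cliff:I}--\eqref{cliff:PHP}. Since $V$ is a Clifford logical operator, Cliffords send stabilizers to stabilizers, and for a CSS code this means that for every $a \in A$ and $b \in B$ we have $V X^a V^\dagger = i^{m_1} X^{a'} Z^{a''}$ and $V Z^b V^\dagger = i^{m_2} X^{b'} Z^{b''}$ with $a', b' \in A$ and $a'', b'' \in B$.

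Next I would extract $a'$ and $a''$ directly from the conjugation table. Reading \eqref{cliff:I}--\eqref{cliff:PHP} coordinate-wise, a $Z$-factor appears at qubit $i$ in the image of $X^a$ exactly when $a_i = 1$ and $V_i \in \{P, HP, PH, H\}$, so $a'' = a \cap V_{P, HP, PH, H}$, and the condition $a'' \in B$ is precisely \eqref{eq:cliff-ab}. Analogously, an $X$-factor survives exactly where $V_i \in \{I, P, HP, PHP\}$, giving $a' = a \cap V_{I, P, HP, PHP} \in A$; since $A$ is $\F_2$-linear and $a \in A$, this is equivalent to $a - a' = a \cap V_{PH, H} \in A$, which is \eqref{eq:cliff-aa}.

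The $Z^b$ case is handled the same way. A $Z$ acquires an $X$-factor at qubit $i$ exactly when $V_i \in \{HP, PH, PHP, H\}$ and retains its $Z$-factor exactly when $V_i \in \{I, P, PH, PHP\}$. Therefore $b' = b \cap V_{HP, PH, PHP, H}$ must lie in $A$, giving \eqref{eq:cliff-ba}, while $b'' = b \cap V_{I, P, PH, PHP}$ must lie in $B$, which by linearity of $B$ and $b \in B$ is equivalent to $b - b'' = b \cap V_{HP, H} \in B$, giving \eqref{eq:cliff-bb}. There is no substantive obstacle: the only non-mechanical step is this use of linearity of $A$ and $B$ to rewrite the ``complementary'' index sets into the cleaner form stated, and phases need no separate accounting because \Cref{prop:small-clifford-set} already absorbs any Pauli prefactor into the reduction from $U$ to $V$.
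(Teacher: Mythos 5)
Your proof is correct and follows essentially the same route as the paper's: reduce to $V$ with factors in $\mathcal{C}_{P,H}$ via \Cref{prop:small-clifford-set}, conjugate the $X$ and $Z$ stabilizer generators coordinate-wise through the six commutation rules, split the resulting $X^{a'}Z^{b'}$ stabilizer into its $X$- and $Z$-parts, and use $\F_2$-linearity of $A$ and $B$ to pass to the complementary index sets. The paper phrases that last step as an equivalence up to multiplication by the stabilizer $X^a$ (resp. $Z^b$), but this is the same observation you make via linearity.
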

\begin{proof}
    Using $\sim$ as equality up to phase and $\equiv$ as equivalence up to stabilizer, we know that $UX^aU^\dagger$ must be a stabilizer for any $a\in A$. Thus:
    \begin{align*}
        UX^aU^\dagger &\sim VX^aV^\dagger\\
        &\sim X^{a\cap V_{I,SHS,S,HS}}Z^{a\cap V_{S,HS,SH,H}}\\
        &\equiv X^{a\cap V_{SH,H}}Z^{a\cap V_{S,HS,SH,H}}
    \end{align*}
    where the last equivalence holds since $X^a$ is a stabilizer and (since all gates in $V$ are in $\mathcal{C}_{S,H}$) we have that $a\cap U_{SH,H}+a\cap U_{I,S,HS,SHS} = a$.

    In a CSS code, if product $X^{a'}Z^{b'}$ is a stabilizer, then both $X^{a'}$ and $Z^{b'}$ are stabilizers, so $a'\in A$ and $b'\in B$. Applying this to the above gives the first two lines of the result; repeating the logic with a $Z$ stabilizer $Z^b$ completes the proof.

\end{proof}

\subsection{Impossibility Results for Hadamard-Type Circuits}\label{sec:impossibility_clifford}

Our first result follows almost immediately from \Cref{prop:clifford-split-taxonomy}. The strategy of the proof is as follows: first we identify what kind of circuits are valid logical implementation on a non-splitting CSS code (\cref{cor:splitting-h-types}). Secondly, we analyze the set of logical actions generated by these circuits (\cref{prop:no-partial-from-H-type}). For example, by showing that any targeted logical $H$ (or $HS, SH,CNOT$) is not in this set, we obtain impossibility result on the partial addressability (\cref{prop:no-h-cnot}). Finally, we extend these result to any code (not just non-splitting) by reducing the impossibility from partial addressability to addressability (\cref{cor:no addressability_local_clifford}). 

\begin{proposition}\label{cor:splitting-h-types}
    Let $\Code=\overline{\CSS}(A,B)$ and $U$ a $1$-local Clifford circuit. If $U$ is a logical operator, then the code splits in $U_{HS}$, $U_{SH}$, and $U_H$.
\end{proposition}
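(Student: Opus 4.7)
The plan is to apply the four containment relations of \Cref{prop:clifford-split-taxonomy} iteratively, exploiting the inclusions $V_H \subseteq V_{PH,H} \subseteq V_{P,HP,PH,H}$ and $V_H \subseteq V_{HP,H} \subseteq V_{HP,PHP,PH,H}$ among the six subsets indexed by $\mathcal{C}_{P,H}$. The key observation is that whenever one of the relations yields $a \cap V_K \in A$ (or $B$), a subsequent relation restricting by intersection with some $V_{K'}$ produces $a \cap (V_K \cap V_{K'})$ on the left. Chaining the four relations therefore drives any element into progressively smaller subcodes and ultimately into $A \cap B$, which is what is needed to conclude splitting. Recall that to show $A$ splits on $h$ it suffices to verify $a \cap h \in A$ for every $a \in A$, and similarly for $B$.

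I would first prove the code splits on $V_H$. Starting from an arbitrary $a \in A$: \eqref{eq:cliff-aa} gives $a \cap V_{PH,H} \in A$; applying \eqref{eq:cliff-ab} to this element (using $V_{PH,H} \subseteq V_{P,HP,PH,H}$) yields $a \cap V_{PH,H} \in B$; then \eqref{eq:cliff-bb} (using $V_{PH,H} \cap V_{HP,H} = V_H$) gives $a \cap V_H \in B$; and finally \eqref{eq:cliff-ba} (using $V_H \subseteq V_{HP,PHP,PH,H}$) gives $a \cap V_H \in A$. The symmetric chain starting from any $b \in B$, applying \eqref{eq:cliff-bb}, \eqref{eq:cliff-ba}, \eqref{eq:cliff-aa}, \eqref{eq:cliff-ab} in order, shows $b \cap V_H \in B$, so both $A$ and $B$ split on $V_H$.

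For the remaining two cases, I would exploit the disjoint decompositions $V_{PH,H} = V_{PH} \sqcup V_H$ and $V_{HP,H} = V_{HP} \sqcup V_H$, which give $a \cap V_{PH} = (a \cap V_{PH,H}) + (a \cap V_H)$ and similarly for $V_{HP}$. For $V_{PH}$: if $a \in A$, then \eqref{eq:cliff-aa} directly gives $a \cap V_{PH,H} \in A$, and combined with $a \cap V_H \in A$ from the previous paragraph we obtain $a \cap V_{PH} \in A$; if $b \in B$, a similar chase through \eqref{eq:cliff-ba}, \eqref{eq:cliff-aa}, \eqref{eq:cliff-ab} first establishes $b \cap V_{PH,H} \in B$, which combined with $b \cap V_H \in B$ gives $b \cap V_{PH} \in B$. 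The $V_{HP}$ case is entirely symmetric, with \eqref{eq:cliff-bb} playing the direct role for $B$ and the analogous chase handling $A$. The main obstacle is purely combinatorial bookkeeping of which of the six subsets $V_K$ lie inside which; once one observes that each consecutive pair of relations in our chains composes cleanly (the new intersection just equals the smaller $V_K$), the rest is mechanical.
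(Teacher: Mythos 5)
Your proposal is correct and follows essentially the same chain of applications of \Cref{prop:clifford-split-taxonomy} as the paper; in particular your derivation of $a\cap V_H\in A$ (via \eqref{eq:cliff-aa}, \eqref{eq:cliff-ab}, \eqref{eq:cliff-bb}, \eqref{eq:cliff-ba}) is identical to the paper's. For the $V_{PH}$ and $V_{HP}$ cases the paper just writes ``similarly,'' whereas you give an explicit and slightly tidier argument by using the disjoint decompositions $V_{PH,H}=V_{PH}\sqcup V_H$ and $V_{HP,H}=V_{HP}\sqcup V_H$ together with the already-established $V_H$ result, but this is a presentational refinement rather than a different route.
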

\begin{proof}
    If $a\in A$, then \Cref{eq:cliff-aa} tells us $a\cap U_{SH,H}\in A$. Applying \Cref{eq:cliff-ab} to $a\cap U_{SH,H}$ tells us $a\cap U_{SH,H}\in B$ as well, since $U_{SH,H}\subseteq U_{S,HS,SH,H}$. Then \Cref{eq:cliff-bb} applied to $a\cap U_{SH,H}$ tells us $(a\cap U_{SH,H})\cap U_{HS,H}=a\cap U_H\in B$. Then applying \Cref{eq:cliff-ba} to $a\cap U_H$, we see that $a\cap U_H\in A$ as well. Thus, $A$ splits on $U_H$. 

    Doing the same procedure for an arbitrary $b\in B$ shows that $b\cap U_H\in B$, so $B$ splits on $U_H$ as well, and hence the full CSS code splits.

    Similarly pushing these intersections through the statement of \Cref{prop:clifford-split-taxonomy} gives the other two results.
\end{proof}

Our results so far restrict the set of circuits that are logical operators, \emph{regardless of their action}. We now show that no addressable logical action is possible:
\begin{proposition}\label{prop:no-partial-from-H-type}
    Let $\Code$ be a non-splitting CSS code, and $U$ a $1$-local Clifford partially addressable unitary. Then the physical implementation of $U$ contains no $H$, $SH$, or $HS$ gates.
\end{proposition}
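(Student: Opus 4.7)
The plan is to combine \Cref{cor:splitting-h-types} with the constraint imposed by partial addressability on the logical action. I would let $V$ denote the $1$-local Clifford circuit implementing $U$ and, by \Cref{prop:small-clifford-set}, assume every gate of $V$ is drawn from $\{I,P,HP,PH,PHP,H\}$ after absorbing trailing Paulis. Since $V$ is a logical operator, \Cref{cor:splitting-h-types} shows that $\Code$ splits on each of the pairwise-disjoint supports $V_H$, $V_{PH}$, and $V_{HP}$. The non-splitting hypothesis on $\Code$ then forces each of these supports to be either $\emptyset$ or all of $\llbracket n\rrbracket$ (since any proper non-empty such support would yield a non-trivial split), and pairwise disjointness permits at most one of them to equal $\llbracket n\rrbracket$. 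So either all three supports are empty, in which case $V$ contains no $H$, $PH$, or $HP$ gate and the conclusion follows, or $V$ is, up to Paulis, the transversal application of exactly one gate $G\in\{H,PH,HP\}$.

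The remainder of the plan is to rule out each of these three transversal cases using partial addressability. The key observation is that in each case $V$ acts nontrivially on \emph{every} logical qubit, whereas partial addressability with $I\neq \llbracket k\rrbracket$ requires the logical action to be the identity on at least one logical qubit. I would work out $V=H^{\otimes n}$ in detail; the other two are symmetric after swapping the roles of $A$ and $B$, using the relations \eqref{cliff:P}--\eqref{cliff:PHP}. Writing the $i$th logical $\overline{X}$ operator as $X^{a_i}$ with $a_i\notin A$ (else $\overline{X}_i$ would be a stabilizer rather than a non-trivial logical), we have $VX^{a_i}V^\dagger = Z^{a_i}$; for $V$ to act as the identity on logical qubit $i$ we would need $Z^{a_i}\equiv X^{a_i}$ modulo stabilizers, which is equivalent to the $Y$-type operator $X^{a_i}Z^{a_i}$ being a stabilizer. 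Since every CSS stabilizer has the form $X^aZ^b$ with $a\in A$ and $b\in B$, this would force $a_i\in A$, contradicting the choice of $a_i$.

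The main subtlety I expect is not the calculation above but the precise reading of ``partial addressability'': the argument uses the convention that ``logical action $\overline{U}$ applied to all tuples in $I$'' means the identity on logical qubits outside $I$. This is the interpretation consistent with the use of the notation $\overline{U}_t$ earlier in \Cref{sec:cliffords} and with the equivalence of addressability and parallel addressability under composition (the lemma immediately after \Cref{def:parallel-addressability}), since composition of single-tuple implementations requires each to act trivially on untargeted logical qubits. Once that convention is fixed, the two-step reduction above delivers the result without further obstacles.
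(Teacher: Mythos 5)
Your proposal is correct and takes essentially the same route as the paper's proof: invoke \Cref{cor:splitting-h-types} together with the non-splitting hypothesis to force each of $V_H$, $V_{PH}$, $V_{HP}$ to be empty or all of $\llbracket n\rrbracket$, then rule out the three transversal cases by noting that partial addressability requires identity action on some logical qubit $i$, whereas $H^{\otimes n}$ (resp.\ $PH^{\otimes n}$, $HP^{\otimes n}$) sends $X^{a_i}$ to $Z^{a_i}$ (resp.\ $Z$- or $X$-plus-$Z$-type operators), which cannot be a logical $X$ up to stabilizers since $a_i\notin A$. Your write-up is only slightly more explicit — noting the pairwise disjointness and spelling out that $X^{a_i}Z^{a_i}$ being a stabilizer would force $a_i\in A$ — but it is not a different argument.
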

\begin{proof}
    \Cref{cor:splitting-h-types} tells us that we either apply $H$ to all or none of the qubits (similarly with $SH$ and $HS$). If our circuit $C$ for $U$ applies $H$ to every gate, then for any logical operator $X^{x_i}$, $CX^{x_i}C^\dagger = Z^{x_i}$. In particular, if $U$ is partially addressable, there must be some logical operator $X^{x_i}$ which should be unchanged by the action of $U$. But $Z^{x_i}$ cannot be equivalent by stabilizers to $X^{x_i}$, as $x_i$ is not in the span of $A$ (where the code is $\overline{\CSS}(A,B)$. 

    The same reasoning holds if the circuit contains only $SH$ or $HS$ gates, by taking their action on either a logical $X$ or logical $Z$ operator.

\end{proof}

Thanks to \Cref{prop:no-partial-from-H-type}, we know that $1$-local Clifford addressable gate must only use $S$ or $SHS$ gates. We now show that this set is too restrictive to enact any $H$, $SH$, $HS$, or $\CNOT$ gates.

\begin{proposition}\label{prop:no-h-cnot}
    The $H$, $SH$, $HS$ and $\CNOT$ gates are not $1$-local Clifford partially addressable for any non-splitting CSS code.
\end{proposition}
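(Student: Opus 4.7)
The plan is to leverage \Cref{prop:no-partial-from-H-type} to restrict attention to $1$-local Clifford circuits built only from gates in $\{I, P, PHP\}$ (plus trailing Paulis absorbed by \Cref{prop:small-clifford-set}). The key structural observation is that these three gates act ``triangularly'' on single-qubit Paulis. Reading off \Cref{cliff:I}, \Cref{cliff:P}, and \Cref{cliff:PHP}: each of $I, P, PHP$ sends $X$ to $X$ up to a $Z$-factor (only $P$ contributes such a $Z$) and sends $Z$ to $Z$ up to an $X$-factor (only $PHP$ contributes such an $X$). Consequently, for any such $V$,
\begin{align*}
    V X^a V^\dagger &\sim X^a Z^{a \cap V_P}, \\
    V Z^b V^\dagger &\sim Z^b X^{b \cap V_{PHP}}
\end{align*}
up to phase. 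In particular, the $X$-coset of $V X^a V^\dagger$ modulo $A$ remains $a + A$, and symmetrically the $Z$-coset of $V Z^b V^\dagger$ modulo $B$ is preserved.

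With this invariant I would dispose of the four gates in turn. For $H$, partial addressability on some logical qubit $j$ requires $V \overline{X}_j V^\dagger \equiv \overline{Z}_j$ modulo stabilizers. Writing $\overline{X}_j = X^{a_j}$ with $a_j \in N(B)$, the left-hand side still lies in the $X$-coset $a_j + A$, whereas the right-hand side is purely $Z$-type. Equating them forces $X^{a_j}$ to be a stabilizer, i.e., $a_j \in A$, which trivializes $\overline{X}_j$ and contradicts the nontriviality of logical qubit $j$. The same argument handles $PH$ (which acts as $X \to Z$ on logicals) and, symmetrically, $HP$ (which acts as $Z \to X$, so I would apply the mirror argument to $\overline{Z}_j$ using the $Z$-coset invariant).

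For $\CNOT$, partial addressability supplies at least one pair $(j, k)$ on which $V$ must act as $\overline{\CNOT}_{jk}$. Then $V \overline{X}_j V^\dagger$ should equal $\overline{X}_j \overline{X}_k$ modulo stabilizers, but the invariant pins the $X$-coset of the left-hand side to $a_j + A$, while $\overline{X}_j \overline{X}_k$ has $X$-coset $a_j + a_k + A$. Equating these forces $a_k \in A$, trivializing $\overline{X}_k$ and giving a contradiction. I do not expect a genuine obstacle in the argument itself: the substantive work has already been done by \Cref{prop:no-partial-from-H-type}, which eliminated $H, PH, HP$ from the per-qubit gate set and delivered the $\{I, P, PHP\}$ reduction. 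After that, the computation is essentially one line; the only care needed is to invoke the definition of partial addressability to extract at least one target qubit or pair, which is immediate since $I$ is required to be nonempty.
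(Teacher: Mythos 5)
Your proposal is correct and follows essentially the same route as the paper: reduce to $\{I,P,PHP\}$ via \Cref{prop:no-partial-from-H-type}, compute the action on logical $X^x$ and $Z^z$ using \Cref{cliff:P} and \Cref{cliff:PHP}, and then observe that the resulting $X$-part (resp.\ $Z$-part) is unchanged, so equating it with the desired logical action would force a logical operator to be a stabilizer. Your ``coset-invariance'' phrasing is a cleaner packaging of the same one-line contradiction the paper derives directly.
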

\begin{proof}
    Using \Cref{prop:no-partial-from-H-type}, no matter how we construct our circuit $U$, if it enacts some addressable logical action it must be equivalent to $U$ made from only $I$, $S$, and $SHS$ gates. For any logical $X$ operator $X^x$, or logical $Z$ operator $Z^z$, we can see that the action of $U$ on these will be 
    \begin{align}
        UX^xU^\dagger = & X^xZ^{x\cap U_S}\\
       \text{ and } UZ^zU^\dagger = & X^{z\cap U_{SHS}}Z^z
    \end{align}
    by the commutation rules in \Cref{cliff:P} and \ref{cliff:PHP}. 

    Throughout, we will use $X^{x_i}$ to denote a logical $X$ operator on the $i$th logical qubit, and the same for $Z^{z_i}$.

    If we want $U$ to enact an addressable $\overline{H}$ or $\overline{SH}$ gate, there must be some qubit $i$ such that $U$ sends $X^{x_i}$ to an operator equivalent to $Z^{z_i}$. Thus, $X^xZ^{x\cap U_S}$ must be equivalent up to stabilizers to $Z^{z_i}$. This would imply $X^x$ is a stabilizer, contradicting that it was a logical $X$ operator. The same logic applied $Z^{z_i}$ works to show that $\overline{HS}$ cannot address a qubit $i$.

    For $U$ to enact an addressable $\overline{\CNOT}$ gate, it must act on some target $i$ and control $j$, so that $UX^{x_i}U^\dagger\equiv X^{x_i+x_j}$. This would imply $X^{x_i}Z^{x_i\cap U_S}\equiv X^{x_i+x_j}$, but this would mean $X^{x_j}$ is  a stabilizer, not a logical operator (a contradiction).

\end{proof}

As a final note, to extend our result to general codes, we show how our non-splitting requirement implies a distance bound. The idea is quite natural: if a code has $1$-local Clifford addressable $H$ or $HS$ or $SH$, then by previous results it splits into a collection of single logical qubit codes. This bounds the performance of this code, by the performance of a single logical qubit code. In particular, any code breaking this bound does not admit $1$-local Clifford addressable $H$, $SH$, $HS$, or $\CNOT$.

\begin{corollary}\label{cor:no addressability_local_clifford}
    If an $\llbracket n,k,d\rrbracket$ CSS code $\Code$ admits a $1$-local Clifford addressable $H$, $SH$, $HS$, or $\CNOT$ gate, then its rate is at most $\frac{1}{2d+1}$.
\end{corollary}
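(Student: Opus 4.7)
The plan is to iterate Proposition~\ref{prop:no-h-cnot} to fully decompose $\Code$ into single-qubit non-splitting subcodes, and then apply a distance bound to each subcode.

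Suppose $\Code$ admits a $1$-local Clifford addressable gate from $\{H, PH, HP, \CNOT\}$. Then Proposition~\ref{prop:no-h-cnot} forces $\Code$ to split as $\Code = \Code_1 \otimes \Code_2$. Since a $1$-local Clifford is a tensor product of single-qubit gates, the addressing circuit for any target logical qubit factors as $U = U_1 \otimes U_2$ along the split, with $U_i$ acting only on the physical qubits of $\Code_i$. When the target lies in $\Code_1$, $U_2$ must be a logical identity on $\Code_2$ while $U_1$ addresses the target in $\Code_1$, so each subcode inherits the addressable-gate property. Iterating Proposition~\ref{prop:no-h-cnot} on each subcode refines the splitting until every remaining block is non-splitting, and the proposition then forces $k_i = 1$ in each block. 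Hence $\Code = \bigotimes_{i=1}^k \Code_i$ with each $\Code_i$ of parameters $\llbracket n_i, 1, d_i \rrbracket$, and by Proposition~\ref{prop:split-code-params} each $d_i \geq d$.

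Next, I would use Proposition~\ref{cor:splitting-h-types} to pin down each subcode's structure: on a non-splitting $\llbracket n_i, 1, d_i \rrbracket$ subcode admitting a $1$-local Clifford logical $H$, the set $U_H$ must equal $\{1, \dots, n_i\}$ (forcing self-dual CSS), with analogous conclusions for $PH$ and $HP$. A distance bound on binary self-dual CSS codes with $k = 1$ then gives $n_i \geq 2 d_i + 1 \geq 2d + 1$, which is tight at Steane's $\llbracket 7, 1, 3 \rrbracket$ code. Summing yields $n \geq k(2d + 1)$, i.e., rate at most $1/(2d+1)$.

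The main obstacle is the per-subcode bound $n_i \geq 2 d_i + 1$: the quantum Singleton bound alone gives only $n_i \geq 2 d_i - 1$, leaving a gap of two. Closing this gap needs the binary self-dual CSS structure specifically, e.g., ruling out self-orthogonal binary MDS codes at dimension $(n_i - 1)/2$ for $d_i \geq 3$, or a direct coset-weight argument in $C^\perp / C$. The $\CNOT$ case is essentially vacuous: a tensor product of single-qubit gates cannot create entanglement between logical qubits across the split, so no $1$-local Clifford circuit can enact an addressable $\CNOT$ in the first place.
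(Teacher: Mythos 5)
Your approach tracks the paper's own proof: iterate Proposition~\ref{prop:no-h-cnot} to refine $\Code$ into a tensor product of non-splitting $\llbracket n_i,1,d_i\rrbracket$ blocks with each $d_i\geq d$, then bound each $n_i$ from below and sum. Where you diverge is in correctly noticing that the quantum Singleton bound alone, $n_i\geq 2d_i-1$, only delivers $k/n\leq 1/(2d-1)$, two short of the claimed $1/(2d+1)$. The paper's own write-up in fact has an algebraic slip at exactly this point: summing $n_i-1\geq 2(d_i-1)$ over the $k$ blocks gives $n-k\geq 2\left(\sum_i d_i - k\right)$, not $2\left(\sum_i d_i - 1\right)$ as written, so the corrected algebra yields only $n\geq k(2d-1)$; the displayed ``rearranging'' inequality does not follow either. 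You have therefore flagged a genuine defect: the stated constant $1/(2d+1)$ cannot be obtained from Singleton alone, in the paper or in your proposal.

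Your proposed repair --- using Proposition~\ref{cor:splitting-h-types} to force each block to be self-dual ($A_i=B_i$), then invoking a sharper per-block bound $n_i\geq 2d_i+1$ --- is the right direction to actually reach the stated constant and matches Steane's $\llbracket 7,1,3\rrbracket$ being the smallest CSS $H$-transversal block (the $\llbracket 5,1,3\rrbracket$ code is not CSS). But this per-block bound is asserted, not proved: for a self-orthogonal binary $A_i$ with $\dim A_i=(n_i-1)/2$, the CSS distance is the minimum weight of $A_i^\perp\setminus A_i$, which can exceed the minimum distance of $A_i^\perp$, so neither classical nor quantum Singleton directly rules out $n_i\in\{2d_i-1,2d_i\}$. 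Until that lemma is supplied, the proposal, like the paper's proof, does not establish the corollary as stated. Your side remark on $\CNOT$ is correct (Proposition~\ref{prop:no-h-cnot} already rules it out for non-splitting codes, and a $1$-local circuit cannot give the required logical coupling, so that case is vacuous), and your factorization step --- that a $1$-local $U$ restricts along the split and each block inherits the addressable action --- is sound. In short: the observation is valuable and exposes an error in the paper's argument, but the key per-block inequality still needs a proof.
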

\begin{proof}
   By \Cref{prop:no-h-cnot}, such a code must split, and the subcodes must further split until they encode only 1 logical qubit. Let the parameters of each code be $\llbracket n_i,1,d_i\rrbracket$. The quantum singleton bound \cite{PRA:CerCle1997} tells us that $n_i -1\geq 2(d_i-1)$. We know $\sum_{i=1}^k n_i = n$, meaning $n-k \geq 2(\sum_{i=1}^kd_i-1) \geq 2(k\min_i\{d_i\}-1)$. However, $\min_i\{d_i\}=d$ by \Cref{prop:split-code-params}, so $n-k \geq 2kd-2$. Rearranging gives $\frac{k}{n} \leq \frac{1}{2d+1} - \frac{2}{k}$.

\end{proof}

\subsection{The Clifford Hierarchy}

The previous sections showed that H-type gates are forbidden for non-splitting codes, leaving $S$ and $SHS$ as the only viable 1-local Clifford logical operators. This raises a natural question: what about gates higher in the Clifford hierarchy ($T$ gates, $CCZ$, and beyond) ? One might hope that moving to higher levels unlocks new addressable operations. We show the opposite: if $S$ and $SHS$ gates admit no logical identities on the code, then no 1-local gate from any higher level of the hierarchy can even preserve the codespace (\cref{cor:P-or-nothing-Clifford-hierarchy}). The argument is inductive. A gate at level $k+1$ of the Clifford hierarchy conjugates Pauli stabilizers into level-$k$ operators. If the resulting operator must fix all codewords, it would be a level-$k$ logical identity, contradicting the inductive hypothesis (\cref{thm:no-clifford-hierarchy-stabilizers}). Thus the entire hierarchy collapses: either $S$ can be used to obtain logical identities, or nothing above it is even a logical operator. The $S$ gate is not merely one gate among many, it is the gateway to the rest of the Clifford hierarchy.

\begin{theorem}\label{thm:no-clifford-hierarchy-stabilizers}
    Let $\Code$ be a non-splitting CSS code of distance at least 2. Suppose there is some $k$ such that no 1-local circuit in $\mathcal{C}_n^k\setminus \mathcal{C}_n^{k-1}$ is a logical identity. Then for all $m\geq k$, no 1-local circuit in $\mathcal{C}^m\setminus\mathcal{C}^{k-1}$ can preserve the codespace.
\end{theorem}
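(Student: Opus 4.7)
The plan is to induct on $m\geq k$, with the base case $m=k$ being exactly the hypothesis. For the inductive step, suppose toward contradiction that some 1-local $U\in\mathcal{C}_n^{m+1}\setminus\mathcal{C}_n^{k-1}$ is a logical identity. I would first dispose of the easy case $U\in\mathcal{C}_n^m$, which directly contradicts the inductive hypothesis. So I may assume $U\in\mathcal{C}_n^{m+1}\setminus\mathcal{C}_n^m$, and since a tensor product of single-qubit gates lies in $\mathcal{C}_n^\ell$ if and only if each factor lies in $\mathcal{C}_1^\ell$ (a routine induction on $\ell$ using the hierarchy's conjugation definition applied to single-qubit Paulis tensored with identities), there must exist a qubit $i$ with $U_i\in\mathcal{C}_1^{m+1}\setminus\mathcal{C}_1^m$. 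The goal is then to prove $U_i\in\mathcal{C}_1^k\subseteq\mathcal{C}_1^m$, yielding a contradiction.

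The key observation is that for any stabilizer $s\in S$, the conjugate $UsU^\dagger$ is itself a logical identity: for any codeword $\ket{\psi}$ we have $UsU^\dagger\ket{\psi}=U(sU^\dagger\ket{\psi})=U(s\ket{\psi})=U\ket{\psi}=\ket{\psi}$, using $U^\dagger\ket{\psi}=\ket{\psi}$ and $s\ket{\psi}=\ket{\psi}$. Moreover $UsU^\dagger$ is 1-local (as both $U$ and $s$ are) and lies in $\mathcal{C}_n^m$ by definition of $\mathcal{C}_n^{m+1}$. The inductive hypothesis then forces $UsU^\dagger\in\mathcal{C}_n^{k-1}$, and unpacking the tensor gives $U_i s_i U_i^\dagger\in\mathcal{C}_1^{k-1}$ for every $s\in S$.

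It remains to show that the values $\{s_i:s\in S\}$ already exhaust the single-qubit Pauli group on qubit $i$. For this I would first establish the lemma that in a non-splitting CSS code $\CSS(A,B)$ of distance greater than $2$, every qubit lies in the support of some $X$-stabilizer and of some $Z$-stabilizer: for if qubit $i$ were missing from every $X$-stabilizer, then $Z_i$ would commute with all stabilizers, but it cannot be a nontrivial logical operator (that would violate the distance hypothesis) nor a stabilizer itself (as $e_i\in B$ combined with $B\subseteq A^\perp$ would force $A$ to have no support on $i$, splitting the code on $\{i\}$); the $Z$-stabilizer case is symmetric. Picking such $X^a$ and $Z^b$, setting $s=X^a$ yields $U_i X U_i^\dagger\in\mathcal{C}_1^{k-1}$, $s=Z^b$ yields $U_i Z U_i^\dagger\in\mathcal{C}_1^{k-1}$, and the stabilizer product $X^a Z^b\in S$ has $i$-th factor equal to $\pm iY$, yielding $U_i Y U_i^\dagger\in\mathcal{C}_1^{k-1}$. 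Together with the trivial identity case, this shows $U_i\in\mathcal{C}_1^k$, completing the contradiction.

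The main obstacle is this last step: because CSS stabilizers are each purely of $X$- or $Z$-type, no single stabilizer directly constrains the action of $U_i$ on $Y$. It is only by multiplying an $X$-stabilizer with a $Z$-stabilizer (and invoking both non-splitting and distance $>2$ to guarantee one of each exists with qubit $i$ in its support) that $Y$ appears as the $i$-th factor of a stabilizer. The rest of the argument is a fairly straightforward induction once the tensor-decomposition behaviour of the hierarchy has been checked.
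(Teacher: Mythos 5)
Your proof is correct and follows essentially the same inductive strategy as the paper's: both exploit non-splitting and distance $>2$ to ensure every qubit $i$ lies in the support of an $X$-stabilizer and a $Z$-stabilizer (and, via their product, a $Y$-type factor), and both derive the contradiction by conjugating stabilizers by the putative logical identity, which stays a 1-local logical identity but drops a level in the hierarchy. The only cosmetic difference is that the paper locates a single stabilizer whose conjugate lands in the forbidden level $\mathcal{C}_n^{k-1}\setminus\mathcal{C}_n^{k-2}$, whereas you conjugate all three Pauli types to conclude $U_i\in\mathcal{C}_1^k\subseteq\mathcal{C}_1^m$ and contradict $U_i\notin\mathcal{C}_1^m$; your write-up is also more explicit about the tensor-factorisation of the hierarchy, which the paper invokes implicitly.
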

\begin{proof}
     We show this inductively, noting that if we prove that no circuit preserves the codespace, this also implies no circuit can be a logical identity.
    
    The base case is given by the assumptions of the theorem, so we suppose that there are no 1-local stabilizers with gates from the $m$ level of the Clifford hierarchy, for some $m$.

    Suppose $U$ is a circuit composed of 1-local gates in the $m+1$th level of the Clifford hierarchy, such that at least one physical qubit $i$ has a gate which is in the $m+1$th level but not the $m$ level, and such that $U$ preserves the codespace. That is, 

    \begin{equation*}
        U = U_1 \otimes U_2\otimes \dots \otimes U_n
    \end{equation*}
    where each $U_n$ is a single-qubit gate in $\mathcal{C}_1^{m+1}$, and $U_i\notin\mathcal{C}_1^{m}$.

    Because $U_i\notin\mathcal{C}_1^{m}$, there is a Pauli matrix $V$ (either $X$, $Z$, or $XZ$) such that $U_iV=A_iU_i$ for $A_i\in \mathcal{C}_1^{m}\setminus\mathcal{C}_1^{m-1}$. Because the code has distance greater than 1 and does not split, for every physical qubit there is both an $X$ and $Z$ stabilizer with support on that qubit, and thus there is some Pauli stabilizer $P_0$ whose support on $i$ is the Pauli matrix $V$ defined above. 
    
    Since $U$ is in the $m+1$th level, there is a operator $A$ such that $UP_0=AU$, where $A=A_1\otimes\dots\otimes A_n\in \mathcal{C}^{m}$. By choice of $U$, we know that the action of $A$ on the $i$th qubit, $A_i$, is not in the $m-1$ level of the Clifford hierarchy, so $A\in \mathcal{C}_n^{m}\setminus\mathcal{C}_n^{m-1}$ as well.
    
    Then we have that for any state $\ket{\psi'}$ in the code, there must be some $\ket{\psi}$ in the code such that $U\ket{\psi'}=\ket{\psi}$. This tells us:
    \begin{align*}
         \ket{\psi} = & U\ket{\psi'}\\
         = & UP_0\ket{\psi'}\\
        =&AU\ket{\psi'}\\
        = & A\ket{\psi}
    \end{align*}
    Thus, $A$ is a logical identity on the codespace. This contradicts the inductive hypothesis since $A$ is a 1-local circuit in $\mathcal{C}_n^{m}\setminus\mathcal{C}_n^{m-1}$.

    Thus, the assumption that $U$ exists must be false.
\end{proof}

\cref{thm:no-clifford-hierarchy-stabilizers} links logical identities at level $k$ to logical operators at level $k+1$: if no 1-local circuit at level $k$ acts as a logical identity, then no 1-local circuit at level $k+1$ can preserve the codespace. The chain therefore only needs a base case, which is precisely what \cref{sec:impossibility_clifford} 
restricts: very few implementations using $H,HS,SH$ can implement logical identities. \cref{cor:P-or-nothing-Clifford-hierarchy} combines these two ingredients, identifying $S$ and $SHS$ as the pivotal gates: if neither gives rise to a logical identity, the entire hierarchy above is ruled out.

\begin{corollary}\label{cor:P-or-nothing-Clifford-hierarchy}
    If a CSS code $\Code=\overline{\CSS}(A,B)$:
    \begin{itemize}
        \item is non-splitting;
        \item has distance at least 2;
        \item is not self-dual (i.e., $A\neq B$);
        \item admits no logical operator from $S$ or $SHS$ gates;
    \end{itemize}
   then $\Code$ does not admit any circuit from 1-local gates in any higher level of the Clifford hierarchy.
\end{corollary}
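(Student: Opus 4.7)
The plan is to apply \Cref{thm:no-clifford-hierarchy-stabilizers} with base level $k=2$, after first sharpening the results of the previous subsection into a clean base case stating that every $1$-local Clifford logical operator on $\Code$ is a Pauli.

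\textbf{Step 1 (base case).} Given a $1$-local Clifford logical operator $U$, I compile it (via \Cref{prop:small-clifford-set}, up to a Pauli) into some $V=\bigotimes_i V_i$ with $V_i\in\{I,P,HP,PH,PHP,H\}$ that is still a logical operator. \Cref{cor:splitting-h-types} says $\Code$ splits on each of $V_H,V_{HP},V_{PH}$; non-splitting forces each of these subsets to be $\emptyset$ or $\llbracket n\rrbracket$. A short case check using the conjugation rules \eqref{cliff:P}--\eqref{cliff:PHP} on both $X^a$ and $Z^b$ stabilizers shows that any of $V_H,V_{HP},V_{PH}=\llbracket n\rrbracket$ forces both $A\subseteq B$ and $B\subseteq A$, contradicting non-self-duality. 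Thus $V$ uses only gates from $\{I,P,PHP\}$, and the hypothesis forbidding $P$ or $PHP$ logical operators forces $V=I$, so $U\in\mathcal{C}_n^1$.

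\textbf{Step 2 (extending through the hierarchy).} Step 1 in particular shows that no $1$-local Clifford logical identity lies in $\mathcal{C}_n^2\setminus\mathcal{C}_n^1$, so applying \Cref{thm:no-clifford-hierarchy-stabilizers} with $k=2$ yields that for every $m\geq 2$, no $1$-local circuit in $\mathcal{C}_n^m\setminus\mathcal{C}_n^1$ is a logical identity. To finish, let $U=\bigotimes_i U_i$ be any $1$-local logical operator with $U_i\in\mathcal{C}_1^m$, $m\geq 2$. For each Pauli stabilizer $S$, the conjugate $USU^\dagger$ is a logical identity, and by the defining property of the Clifford hierarchy it is a $1$-local element of $\mathcal{C}_n^{m-1}$; the previous sentence then forces it into $\mathcal{C}_n^1$, i.e., a Pauli. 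Because $\Code$ does not split and has distance at least $2$, every physical qubit lies in the support of both some $X$- and some $Z$-type stabilizer (as noted inside the proof of \Cref{thm:no-clifford-hierarchy-stabilizers}), so each $U_i$ sends both $X$ and $Z$ to Paulis and is therefore Clifford. Hence $U$ is a $1$-local Clifford logical operator, and Step 1 gives $U\in\mathcal{C}_n^1$, contradicting the assumption that $U$ came from a strictly higher level of the hierarchy.

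The main obstacle I expect is the case distinction at the end of Step 1: each of the three ``full-support'' possibilities must force both containments $A\subseteq B$ and $B\subseteq A$ using the relevant single-qubit conjugation rule, which is mechanical but needs to be traced separately for $H$, $HP$, and $PH$. The rest is bookkeeping, since the induction in Step 2 is immediate once one observes that $USU^\dagger$ remains $1$-local and drops exactly one level in the Clifford hierarchy whenever $S$ is a Pauli.
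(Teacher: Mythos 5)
Your proof is correct and follows essentially the same route as the paper's: Step~1 reproduces the content of \Cref{cor:splitting-h-types} together with the global/non-self-dual case analysis (the paper cites \Cref{prop:no-partial-from-H-type} for the non-global case, but that proposition's proof leans on exactly the same \Cref{cor:splitting-h-types} argument you invoke directly), and Step~2 invokes \Cref{thm:no-clifford-hierarchy-stabilizers} with $k=2$. The one genuine addition is that your Step~2 explicitly spells out the conjugation argument (each $U_iSU_i^\dagger$ drops to a lower hierarchy level, is forced to be a Pauli by the theorem, and hence each $U_i$ is Clifford) that the paper compresses into the single remark ``such a gate would need to preserve the codespace''; this makes the passage from ``no logical identities'' to ``no logical operators'' explicit rather than implicit.
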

\begin{proof}
    Using \Cref{thm:no-clifford-hierarchy-stabilizers} we only need to show that such a code admits no logical identities from 1-local Clifford gates except the Paulis.

    By assumption, it admits no logical operator from $S$ or $SHS$, so they cannot be logical identities.

    Then we consider $H$, $HS$, and $SH$ gates. From \Cref{prop:no-partial-from-H-type}, we know that a circuit from these gates can only preserve the code if we apply the physical gate to every physical qubit. In this case, the circuit can only preserve the code if $A=B$, using the rules in \Cref{prop:clifford-split-taxonomy}, in which case the circuit has non-identity logical action. Thus $H$, $HS$, and $SH$ cannot form logical identities on $\Code$.

\end{proof}

In short, either the code admits logical operators from phase (or $SHS$) gates, or it admits nothing else from the (1-local) Clifford hierarchy. One should remark that this result is much more general than the previous ones as it is not about addressability, but about being a logical operator. 

In turn, we might wonder whether similar techniques from our $H$ impossibility results could apply to just $S$. However, this is not true: the same commutation rules from \Cref{cliff:P} tell us that if a code  $\overline{\CSS}(A,B)$ admits a logical operator formed by single-qubit $S$ circuit on a set of physical qubits $p$, then we can conclude that $a\cap p\in B$ for all $a\in A$. This is significantly less restrictive than the requirements for $H$, and indeed codes with transversal T gates (such as \cite{PRA:BraHaa2012}) admit partially addressable $S$ gates. One can see this by noting that for $T^t$ to preserve the codespace, by \Cref{cond_valid_unitary}, $T^tX^aT^{-t}$ must be a logical identity for the code for any stabilizer $X^a$, but this operator will include $S$ gates on qubits in $a\cap t$.

\section{Permutation Isomorphisms and Addressability}\label{sec:isomorphisms}
Because of our restrictions that addressable gates should be logical operators, i.e., preserve the codespace, we are effectively studying automorphisms of the code. Here we consider this more directly, but consider isomorphisms between codes, and consider only isomorphisms formed by permutations of qubits. Such isomorphisms (in particular automorphisms) were proposed as a method to perform logical operations as early as 2013~\cite{markus}, and recently detailed for certain quantum LDPC codes~\cite{photonic2025swapCNOT}. 

We show that there are actually not that many isomorphisms like this, which rules out certain kinds of circuits for high-rate codes. Moreover, after parsing which circuits are valid implementation, we further reduce the number of implementations by showing that many of them will have the same logical action. Hence, we obtain an upperbound on the number of different logical actions that isomorphisms can implement (\cref{thm:logical-permutation-count}). Finally, we conclude that on codes with high rate, this number is not big enough to possibly have addressable SWAPs or CNOTs (\cref{coro style}). 

Throughout this section we will implicitly work with \emph{qudit} CSS codes except where otherwise mentioned, hence the use of $\mathbb{F}_q$ instead of $\mathbb{F}_2$.

\subsection{Counting Permutations}

\begin{definition}
    Let $r,n \in \mathbb{N}$, and $\Code_1,\Code_2$ be classical codes with $r\times n$ parity check matrices $H_1,H_2 \in \mathcal{M}_{r \times n}(\mathbb{F}_q)$, we say that $\tau_n \in S_n$ is a \emph{permutation isomorphism} from $\Code_1$ to $\Code_2$ iff there exists $U \in \text{GL}_r(\mathbb{F}_q)$ such that $UH_1 = H_2P$ where $P$ is the permutation matrix of $\tau_n$.
    
    If $\Code_1=\Code_2$, we say $\tau_n$ is a \emph{permutation automorphism}.
\end{definition}

\begin{remark}
    We can define it equivalently with the generator matrix, since if $P$ preserve the codespace, it equivalently preserves its orthogonality.
\end{remark}

\begin{proof}
    Suppose $UG_1=G_2P$. For any $x\in \Code_2^\perp$ and $y\in\Code_1$, $\langle y,xP\rangle=\langle Uy',xP\rangle$ for some $y'\in \Code_1$, since $U$ is invertible. But this means $Uy'=y''P$ for some $y''\in \Code_2$, so the inner product is equal to $\langle y''P,xP\rangle = \langle y'',x\rangle=0$. Thus, $xP\in \Code_1^\perp$ for any $x\in \Code_2^\perp$; by dimensionality arguments, this tells us $UH_1=H_2P$.

\end{proof}

Since CSS codes are built from classical codes, this definition naturally extends to CSS codes:

\begin{definition}
    $\tau_n \in S_n$ is a permutation isomorphism from $\CSS(\Code_{1,X},\Code_{1,Z})$ to $\CSS(\Code_{2,X},\Code_{2,Z})$ iff it is a permutation isomorphism from $\Code_{1,X}$ to $\Code_{2,X}$ and from $\Code_{1,Z}$ to $\Code_{2,Z}$.

\end{definition}
  In this case, for $i\in\{1,2\}$ we can form a matrix $H_i = \begin{pmatrix}
        H_{i,X}^T & H_{i,Z}^T
    \end{pmatrix}^T$ where $H_{i,X},H_{i,Z}$ are the respective parity checks of $\Code_{i,X},\Code_{i,Z}$. Then we get that 
    \begin{equation}
    UH_1 = H_2P\text{ with }U = \begin{pmatrix}
        U_X & 0 \\ 0 & U_Z
    \end{pmatrix}.
    \end{equation}

\begin{example}
    For an automorphism, take $H_1 = \begin{pmatrix}
        1&1&0\end{pmatrix} $ and $H_2 = \begin{pmatrix}
            0&1&1
        \end{pmatrix}$, 
    we get $H = \begin{pmatrix}
            1&1&0\\
            0&1&1
    \end{pmatrix}$. \\
    
    We can see that 
    \[\begin{pmatrix}
        0&1\\1&1
    \end{pmatrix}\begin{pmatrix}
            1&1&0\\
            0&1&1
    \end{pmatrix} = \begin{pmatrix}
            1&1&0\\
            0&1&1
    \end{pmatrix}\begin{pmatrix}
            0&1&0\\
            0&0&1\\
            1&0&0
    \end{pmatrix}.\]
    
This means that the permutation of qubits represented by the permutation matrix on the right preserves the stabilizer group, meaning that it is a valid permutation automorphism for the classical code represented by $H$ but it is not a valid permutation automorphism for $\CSS(\mathcal{C}_1,\mathcal{C}_2)$ as it is not an automorphism on $\mathcal{C}_1,\mathcal{C}_2$.
\end{example}

\begin{remark}
The matrix $H = \begin{pmatrix}
    H_1^T & H_2^T
\end{pmatrix}^T$, corresponds to the parity check matrix of the CSS code where we forget if a check is of type X or Z.
\end{remark}

Intuitively, this means that a permutation isomorphism can permute the columns of $G$ from the first code in such a way that it maps to the second codespace. However this puts strong conditions on the form of the invertible matrix $U$. From those conditions we can extract an upper bound on the number of different such matrices $U$, and thus and upper bound of the number of permutation automorphisms.

\begin{proposition}\label{equality_carnical}
    The number of pairs $U,P$ of invertible matrix and permutation such that $UH_2 = H_1P$ only depends on the vector spaces $\Span(H_1)$ and $\Span(H_2)$. Moreover, the number of such pairs is constant under permutations of the columns of $H_2$ or $H_1$. 
\end{proposition}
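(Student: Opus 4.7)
The plan is to prove both statements by exhibiting explicit bijections between the sets of valid pairs. First I would fix some notation: write $\mathcal{S}(H_1,H_2)$ for the set of pairs $(U,P)$ with $U$ invertible, $P$ a permutation matrix, and $UH_2 = H_1P$. The key observation is that the parity check matrix of a classical code is determined only up to left multiplication by an invertible matrix: two matrices have the same row span if and only if they differ by left multiplication by some invertible $V$.

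For the first claim, I would take two parity check matrices $H_1, H_1'$ with $\Span(H_1) = \Span(H_1')$, and similarly $H_2, H_2'$. Then there exist invertible $V_1, V_2$ with $H_1' = V_1 H_1$ and $H_2' = V_2 H_2$. I would define the map
\begin{equation}
\Phi : \mathcal{S}(H_1,H_2) \longrightarrow \mathcal{S}(H_1',H_2'), \qquad (U,P)\longmapsto (V_1 U V_2^{-1}, P)
\end{equation}
and verify that $(V_1 U V_2^{-1}) H_2' = V_1 U H_2 = V_1 H_1 P = H_1' P$, so $\Phi$ is well-defined. Invertibility of $V_1, V_2$ makes $V_1 U V_2^{-1}$ invertible, and the inverse map $(U',P)\mapsto (V_1^{-1} U' V_2, P)$ shows $\Phi$ is a bijection. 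Hence $|\mathcal{S}(H_1,H_2)| = |\mathcal{S}(H_1',H_2')|$, which proves that this cardinality depends only on the row spans.

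For the second claim, given permutation matrices $Q_1, Q_2$, set $H_1' = H_1 Q_1$ and $H_2' = H_2 Q_2$. I would define
\begin{equation}
\Psi : \mathcal{S}(H_1,H_2) \longrightarrow \mathcal{S}(H_1',H_2'), \qquad (U,P)\longmapsto (U, Q_1^{-1} P Q_2).
\end{equation}
The equality $U H_2' = U H_2 Q_2 = H_1 P Q_2 = (H_1 Q_1)(Q_1^{-1} P Q_2) = H_1' (Q_1^{-1} P Q_2)$ confirms well-definedness, and $Q_1^{-1} P Q_2$ is still a permutation matrix since permutations form a group. The inverse is $(U,P')\mapsto (U, Q_1 P' Q_2^{-1})$, giving a bijection.

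The main obstacle is purely notational bookkeeping: making sure the left/right sides of the equation $UH_2 = H_1 P$ get multiplied by the correct change-of-basis matrices and that the resulting $U'$ or $P'$ still lies in the correct set (invertible matrices, resp. permutation matrices). No deeper argument is required, since the statement is essentially a symmetry of the defining equation under left multiplication by invertibles on both factors and under column permutations of both parity check matrices.
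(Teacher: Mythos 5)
Your proof is correct and takes essentially the same approach as the paper: exhibiting an explicit bijection between the two sets of valid pairs induced by the change of basis $H_i \mapsto V_i H_i$ and the column permutation $H_i \mapsto H_i Q_i$. The paper combines both transformations into a single step, whereas you cleanly separate them into two bijections $\Phi$ and $\Psi$; this is a presentational difference only, and if anything your version is easier to check.
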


\begin{proof}
    Let $\Bar{H_1}$ be another basis of the first vector space and $\Bar{H_2}$ another basis of the second. This means that there exists $W_1$ and $W_2$ such that $\Bar{H_1} = W_1H_1$ and $\bar{H_2}=W_2H_2$. Let $P'_1$ and $P'_2$ be arbitrary permutations.

    Let $U$ be an invertible matrix and let $P$ be a permutation such that $UH_2=H_1P$, or $UH_2P^{-1}=H_1$. Then $(W_1UW_2^{-1})\Bar{H}_2P'_2(P_2^{'-1}P^{-1}P'_1) = W_1H_1P'_1= \Bar{H}_1P'_1$. Since $WUW^{-1}$ is invertible and $P_2^{'-1}P^{-1}P'_1$ is a permutation, this is a valid pair for $\Bar{H}_2P'_2$ and $\Bar{H}_1P'_1$ Thus, there are as many pairs for $(H_1,H_2)$ as for $(\Bar{H}_1P'_1,\Bar{H}_2P'_2)$.

\end{proof}

Here we make a distinction between counting distinct pairs $(U,P)$, counting distinct permutations $P$ that belong to such a pair, and counting distinct $U$ that belong to such a pair. 

\begin{lemma}\label{lem:count-u-and-pairs}
    Let $m$ be the number of distinct pairs $(U,P)$ of an invertible matrix $U$ and a permutation $P$ such that $UH_2=H_1P$, and let $m_u$ be the number of distinct $U$ from such a pair and $m_p$ the number of distinct $P$. Then for a fixed $U$, if it is part of pair $(U,P_1)$ then $(U,P_2)$ is also a valid pair if and only if $P_2$ is in the same right coset as $P_1$ of the subgroup of permutations $\text{Sym}(H_1):=\{P\in\text{Sym}(n): H_1P=H_1\}$. Consequently, $m_p=m=m_u\vert\text{Sym}(H_1)\vert$.
\end{lemma}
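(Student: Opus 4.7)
The plan is to split the statement into three pieces: the coset characterization of the set of $P$ compatible with a fixed $U$, the count $m = m_u \cdot |\text{Sym}(H_1)|$ (which is an immediate consequence), and the equality $m_p = m$ (which requires going in the other direction: fixing $P$ determines $U$).

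First, I would prove the coset characterization directly from the defining equation. Suppose $(U,P_1)$ and $(U,P_2)$ are both valid pairs, so $H_1 P_1 = U H_2 = H_1 P_2$. Then $H_1 P_1 P_2^{-1} = H_1$, i.e.\ $P_1 P_2^{-1} \in \text{Sym}(H_1)$, which is precisely the statement that $P_1$ and $P_2$ lie in the same right coset $\text{Sym}(H_1) P_2$. Conversely, if $P_2 = S P_1$ with $S \in \text{Sym}(H_1)$ and $(U,P_1)$ is valid, then $H_1 P_2 = H_1 S P_1 = H_1 P_1 = U H_2$, so $(U,P_2)$ is valid. Hence, for any $U$ appearing in some valid pair, the set of compatible $P$ is exactly one right coset of $\text{Sym}(H_1)$, of cardinality $|\text{Sym}(H_1)|$. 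Summing over the $m_u$ distinct values of $U$ that appear yields $m = m_u \cdot |\text{Sym}(H_1)|$.

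Second, I would show $m_p = m$ by arguing that the permutation $P$ determines the matrix $U$ uniquely. Suppose $U H_2 = H_1 P = U' H_2$. Then $(U - U') H_2 = 0$, so $U - U'$ vanishes on the column space of $H_2$. Since $H_2$ is a parity-check matrix we may assume its rows are linearly independent (redundant checks can be dropped without affecting the code or $\text{Sym}(H_1)$), and so its columns span the full ambient space $\mathbb{F}_q^{r}$ on which $U$ acts. Therefore $U - U' = 0$. This shows that the map $(U,P) \mapsto P$ from valid pairs to permutations is injective, giving $m_p = m$.

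The one subtlety to flag is the full-row-rank assumption on $H_2$ in the last step; without it, multiple $U$'s could correspond to the same $P$ (they would differ by a matrix that annihilates the column space of $H_2$), and the equality $m_p = m$ could fail. Everything else is a direct unwinding of the definitions, so I do not expect any real obstacle beyond being explicit about this hypothesis.
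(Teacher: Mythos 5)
Your proof is correct and takes essentially the same approach as the paper: the same coset argument (from $H_1 P_1 P_2^{-1} = H_1$) and the same uniqueness-of-$U$ argument using the full row rank of $H_2$, just presented in the opposite order. The explicit flag that $H_2$ must have full row rank is a nice clarification; the paper invokes this ("Since $H_2$ is full-rank") but does not dwell on it.
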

\begin{proof}
    To prove $m_p=m$, we prove that each $P$ has exactly one $U$ that forms such a pair. If there were more than one, then $U_1H_2=H_1P=U_2H_2$, meaning $U_2U_1^{-1}H_2=H_2$. Since $H_2$ is full-rank, this implies $U_2U_1^{-1}=I$. 

    For the second, if $(U,P_1)$ and $(U,P_2)$ are valid pairs, then $UH_2=H_1P_1=H_1P_2$. This implies $H_1P_1P_2^{-1}=H_1$. Immediately this implies $P_1P_2^{-1}\in\text{Sym}(H_1)$. Conversely, if any $P$ is in $\text{Sym}(H_1)P_1$, then $P = P'P_1$ where $P'\in\text{Sym}(H_1)$. Then $H_1P=H_1P'P_1=H_1P_1$, so $UH_2=H_1P$ as well.

\end{proof}

The size of this group of permutations that fix $H_1$ can be easily found, since they are simply products of permutations of columns of $H_1$ which are exactly the same. 

Clearly the number of valid permutations $P$ tells us the number of qudit permutations which preserve the code. However, we will actually care more about the number of invertible matrices $U$, which is smaller. We need a small lemma to prove that we can quotient out by $\Sym(H_1)$ in this way:

\begin{lemma}\label{lem:duplicate-columns}
    Let $\Code=\overline{\CSS}(A,B)$, with $H_A$ a parity check matrix for $A$. Then if two columns of $H_A$ are identical, either $\Code$ has distance at most 2 or swapping these two qubits acts as a logical identity.
\end{lemma}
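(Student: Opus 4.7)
The plan is to first translate the column-equality condition into the algebraic statement that $e_i + e_j \in A^\perp$, where $e_i, e_j$ denote standard basis vectors of $\F_2^n$. Under the convention used in the background (where $H_A$ has rows spanning $A$, i.e. it is the parity check of $A^\perp$), saying that columns $i$ and $j$ are identical is saying exactly that $a_i = a_j$ for every $a \in A$, which is the same as $\langle a, e_i+e_j\rangle = 0$ for every $a \in A$. In other words $Z_iZ_j = Z^{e_i+e_j}$ lies in $N(S_X)$, so it is either a $Z$-stabilizer or a logical $Z$-type operator.

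Next I would split on whether $e_i + e_j \in B$. If $e_i + e_j \notin B$, then $Z_iZ_j$ is a non-trivial logical $Z$ of weight $2$, so the code has distance at most $2$, which is the first alternative of the lemma. Otherwise $e_i + e_j \in B$, and the task reduces to showing that $\SWAP_{i,j}$ acts as a logical identity. The main computational tool I will use throughout is the elementary identity that for $\sigma$ the transposition of indices $i,j$, any $v \in \F_2^n$ satisfies $\sigma(v) = v + (v_i+v_j)(e_i+e_j)$.

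To check that $\SWAP_{i,j}$ is a logical identity in this case, I would run through the four relevant families of Pauli operators, using \Cref{cond_valid_unitary}. For an $X$-stabilizer $X^a$ with $a\in A$, the assumption $a_i=a_j$ gives $\sigma(a)=a$, so $X^a$ is fixed exactly. For a $Z$-stabilizer $Z^b$ with $b\in B$, the offset $\sigma(b)-b$ equals $0$ or $e_i+e_j$, both of which lie in $B$ by assumption, so $Z^{\sigma(b)} \in S_Z$. For a logical $X$ representative $X^c$ with $c\in B^\perp$, the assumption $e_i+e_j \in B$ forces $\langle c, e_i+e_j\rangle = c_i + c_j = 0$, so $\sigma(c)=c$ and $X^c$ is fixed on the nose. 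For a logical $Z$ representative $Z^d$ with $d\in A^\perp$, the offset $\sigma(d)-d$ is again $0$ or $e_i+e_j \in B$, so $Z^{\sigma(d)}$ differs from $Z^d$ by a $Z$-stabilizer and therefore represents the same logical operator.

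The main subtlety — not so much an obstacle as a point to be careful about — is remembering that ``logical identity'' requires trivial action on every logical representative modulo stabilizers, and not merely that the stabilizer group is preserved. The proof hinges on a symmetric dichotomy: either $e_i+e_j$ is absorbed into $B$, in which case SWAP harmlessly reshuffles things, or it is not, in which case $Z_iZ_j$ itself is the weight-$2$ logical witness that forces the distance to drop.
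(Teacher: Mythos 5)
Your proof is correct and follows essentially the same route as the paper: translate the column equality into $e_i+e_j\in A^\perp$, split on whether $Z^{e_i+e_j}$ is a $Z$-stabilizer or a weight-$2$ logical $Z$, and in the stabilizer case check that the transposition fixes all $X$-type supports and shifts $Z$-type supports by an element of $B$. The only cosmetic differences are that the paper phrases the calculation over $\mathbb{F}_q$ with $e_i-e_j$ in place of $e_i+e_j$, and it folds your first and third cases into one by noting $A\subseteq B^\perp$.
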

\begin{proof}
    Suppose w.l.o.g. the first two columns of $H_A$ are the same. Then the vector $v:=(1,-1,0,\dots, 0)\in A^\perp$, so $Z^{v}$ is either a logical $Z$ operator or a $Z$ stabilizer. If $Z^v$ is a logical operator, the code has distance at most 2. 

    If $Z^v$ is a stabilizer, then every $x\in B^\perp\subseteq \F_q^n$ must have $x_1=x_2$ to be orthogonal to $v$. However, all logical $X$ operators must be of the form $X^x$ for some $x\in B^\perp$, so $x_1=x_2$ for all $X$ stabilizers \emph{and} logical operators. Thus, swapping these two qubits acts as the identity on $X$ operators. 

    For $Z$ operators, take any $b=(b_1,b_2,\dots, b_n)\in A^\perp$ (either a $Z$ stabilizer or logical operator). If $b_1=b_2$ it is also invariant under swapping qubits 1 and 2, and if $b_1\neq b_2$, then swapping the first two qubits will produce the operator:
    \begin{equation}
        (b_2,b_1,b_3,\dots,b_n)=b + (b_2-b_1,b_1-b_2,0,\dots,0)
    \end{equation}
    However, the second vector is clearly a multiple of $v=(1,-1,0,\dots,0)$, and we know that $Z^v$ is a $Z$-stabilizer. Thus, the operator $Z^b$ after the swap is equivalent to itself up to multiplication by a stabilizer, so it must be the same logical operator.

    Both arguments also show that the $X$ and $Z$ \emph{stabilizers} are also invariant when $Z^v$ is a $Z$-stabilizer.

\end{proof}

\begin{corollary}\label{cor:identity-automorphisms}
    Let $\Code=\overline{\CSS}(A,B)$ have distance at least 3. Then any permutation automorphism $\tau$ such that $H_AP_\tau=H_A$ acts as a logical identity. 
\end{corollary}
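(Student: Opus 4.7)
The plan is to decompose $\tau$ into transpositions of pairs of identical columns of $H_A$ and then apply \Cref{lem:duplicate-columns} to each one.

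First, I would unpack the hypothesis $H_A P_\tau = H_A$. Writing $H_A$ column-wise as $(h_1, h_2, \dots, h_n)$, this equation says that for every $i$, column $h_i$ is equal to column $h_{\tau(i)}$. Consequently, $\tau$ preserves the equivalence classes of the relation "$i \sim j$ iff $h_i = h_j$": it is a product of permutations, one on each such equivalence class.

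Second, I would note that any permutation of a finite set can be written as a product of transpositions within that set. Applied to each equivalence class of identical columns, this expresses $\tau$ as a product of transpositions $(i\ j)$ where $h_i = h_j$. By \Cref{lem:duplicate-columns}, because the code has distance at least $3$ (ruling out the "distance at most $2$" alternative of that lemma), each such transposition acts as a logical identity on $\Code$. Since the set of logical identities is closed under composition (it is a group), the product $\tau$ also acts as a logical identity.

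I do not anticipate any real obstacle: the only point that warrants a moment's care is that the transpositions we decompose $\tau$ into must all be between columns that are genuinely equal, which is guaranteed by restricting the decomposition to individual equivalence classes, combined with the fact that distance at least $3$ is inherited from $\Code$ and is precisely what lets us invoke the non-trivial conclusion of \Cref{lem:duplicate-columns}.
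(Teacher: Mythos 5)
Your proof is correct and follows essentially the same route as the paper's: both observe that $H_A P_\tau = H_A$ forces $\tau$ to permute only identical columns, decompose $\tau$ into transpositions of identical columns, invoke \Cref{lem:duplicate-columns} for each transposition, and conclude by closure of logical identities under composition. You merely spell out the equivalence-class structure more explicitly than the paper does, which is a harmless elaboration, not a different argument.
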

\begin{proof}
    We can see that $H_AP_\tau=H_A$ only if it permutes only columns of $H_A$ which are identical. Thus, it can be constructed as a product of transpositions on such columns. From \Cref{lem:duplicate-columns}, each transposition is a logical identity, so their product is as well.

\end{proof}

Thus, to count logical operations, we care only about the number of invertible matrices $U$ such that $UH_2=H_1P$, not the number of pairs or permutations.

\begin{proposition}\label{upp nbr permutation}
    Calling $m$ the number of invertible matrices $U$ such that there exists a permutation $P$ such that $UH_2 = H_1P$, and $p$ the number of pairs of permutations such that $P'H_2' = H_1'P''$ we have 
$ p \leq m \leq \frac{n!}{(n-r)!}$, where $H_1'$ and $ H_2'$ are the right block in the row reduced forms of $H_1$ and $H_2$. 
\end{proposition}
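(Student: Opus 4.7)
The plan is to first reduce $H_1$ and $H_2$ to canonical row-reduced form using \Cref{equality_carnical}, then analyze each inequality by exploiting the resulting block structure. Since \Cref{equality_carnical} guarantees the count is invariant under row operations on $H_1, H_2$ and under column permutations, I may assume $H_1 = [I_r \mid H_1']$ and $H_2 = [I_r \mid H_2']$. The equation $U H_2 = H_1 P$ then becomes $[U \mid U H_2'] = H_1 P$.

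For the upper bound, I would observe that $H_1 P$ is simply $H_1$ with its columns permuted, so its first $r$ columns form an ordered selection, without repetition, of $r$ columns from the $n$ columns of $H_1$. This $r \times r$ submatrix equals $U$, so $U$ is fully determined by which ordered $r$-tuple of columns appears in the first block. There are exactly $\frac{n!}{(n-r)!}$ such ordered tuples, and only some produce an invertible $U$, giving $m \leq \frac{n!}{(n-r)!}$.

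For the lower bound, I would construct an explicit injection from pairs $(P', P'')$ of permutations satisfying $P' H_2' = H_1' P''$ into valid pairs $(U, P)$ for the full problem by setting $U := P'$ (a permutation matrix, hence invertible) and $P := \begin{pmatrix} P' & 0 \\ 0 & P'' \end{pmatrix}$. A direct calculation yields
\begin{equation*}
U H_2 = P'[I_r \mid H_2'] = [P' \mid P' H_2'] = [P' \mid H_1' P''] = [I_r \mid H_1']\, P = H_1 P,
\end{equation*}
confirming the pair is valid, and distinct $(P', P'')$ manifestly produce distinct block-diagonal $P$, establishing $p \leq m$.

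The main obstacle is essentially bookkeeping: ensuring the reduction to row-reduced form also preserves the count $p$ — whose definition involves the reduced matrices $H_1', H_2'$ themselves, so one must re-apply \Cref{equality_carnical} to the $(n-r)$-column sub-problem — and clarifying whether $m$ is intended to count invertible matrices or valid pairs $(U, P)$, since the two differ by a factor of $|\text{Sym}(H_1)|$ by \Cref{lem:count-u-and-pairs}. Both interpretations yield essentially the same argument, so I do not expect this to cause real difficulty.
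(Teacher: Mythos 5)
Your proposal matches the paper's proof essentially exactly: both reduce to the canonical form $[I_r \mid H_i']$ via \Cref{equality_carnical}, observe for the upper bound that $U$ must equal an ordered selection of $r$ columns of $H_1$ (of which there are at most $\frac{n!}{(n-r)!}$, not all invertible), and for the lower bound extend any pair $(P', P'')$ to the block-diagonal permutation $\begin{pmatrix} P' & 0 \\ 0 & P'' \end{pmatrix}$. Your closing remark is also well-taken: the proposition's $m$ is ambiguous between $m_u$ and the number of pairs (which differ by a factor of $|\Sym(H_1)|$ per \Cref{lem:count-u-and-pairs}), and your injection actually shows $p$ is at most the number of valid \emph{pairs}, not the number of distinct $U$'s, since two pairs sharing the same $P'$ map to the same $U$ — the paper's proof has the identical gloss, so this is a flaw in the original, not in your reasoning.
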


\begin{proof}
     Since $H_2$ has $r$ independent rows, we can reduce it such that it has the form $\Bar{H}_2 = \left(\begin{array}{c | c}
        I_r & H_2' 
    \end{array}\right) = WH_2P'$ where $W \in GL_r(\mathbb{F}_q)$ and $P'$ a permutation matrix of dimension $n$.\\ 

    Using \cref{equality_carnical} we get that there are as many pairs for $H_2$ and for $\Bar{H}_2$. 

    Let $U$ be an invertible matrix of rank $r$, we get that $U\Bar{H}_2 = \left(\begin{array}{c | c}
        U & UH_2'
    \end{array}\right)=H_1P$. Thus, for $U$ to be part of a valid pair, it has to be made of $r$ independent columns of $H_1$. And there are less than $\frac{n!}{(n-r)!}$ ways to pick such set of columns. \\

    For the lower bound, we can check that we can extend any valid permutation on $H_1',H_2'$ into a valid permutation on $\Bar{H}_1$. Let $P',P''$ be permutation matrix of dimension respectively $r,n-r$.
    \begin{align*}
     P'\Bar{H}_2 &= \left(\begin{array}{c | c}
        P' & P'H_2'
     \end{array}\right) \\ 
                 &= \left(\begin{array}{c | c}
        P' & H_1'P''
     \end{array}\right) \\ 
                 &= \left(\begin{array}{c | c}
        I_r & H_1'
     \end{array}\right) \begin{pmatrix}
        P' & 0 \\ 
        0 & P''
    \end{pmatrix} \\ 
                 &= \Bar{H}_1 \begin{pmatrix}
        P' & 0 \\ 
        0 & P''
    \end{pmatrix}  
    \end{align*}

    Hence, any pair of permutations on $H_1',H_2'$ can be extended into a valid one on $\Bar{H}_1,\Bar{H}_2$.

\end{proof}

\begin{remark}
    Furthermore, since we want $U\Bar{H}_2 = \Bar{H}_1P$ with $P$ a permutation of the columns, then we also need $U\Bar{H}_2$ to generate $I_r$, meaning that the columns of $U^{-1}$ should also be in $\Bar{H}_2$. This is a stricter restriction and could reduce the upper bound significantly depending on the code. We can use this observation to create an algorithm to find all matrices $U$ having $UB = BP$ without going over all permutations. This will be slightly better but still very inefficient.  
\end{remark}

Finally, we summarize as a theorem:

\begin{theorem}\label{thm:logical-permutation-count}
    Let $\Code=\overline{\CSS}(A,B)$ be an $\llbracket n,k,d\rrbracket$ code with $d\geq 3$. The number of distinct logical operations that can be implemented by permuted qubits in the code is upper-bounded by $\frac{n!}{k_{max}!}$, where $k_{max}=\max\{\dim(A),\dim(B)\}$.
\end{theorem}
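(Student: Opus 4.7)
The plan is to bound the number of distinct logical actions implementable by qubit permutations by working through each classical code separately and intersecting the constraints. Any permutation $\tau$ that preserves $\Code=\CSS(A,B)$ must, by definition, preserve each of $A$ and $B$ as classical codes, and so by the CSS permutation isomorphism definition it gives rise to invertible matrices $U_A,U_B$ satisfying $U_AH_A=H_AP_\tau$ and $U_BH_B=H_BP_\tau$. Thus the relevant automorphism group is the intersection of the permutation automorphism groups of $A$ and of $B$, and I aim to bound the number of distinct logical actions through each factor.

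Next, I would argue that the number of distinct logical actions is at most $m_u^A$, and symmetrically at most $m_u^B$. The key observation is that if two permutations $P_1,P_2$ preserving $A$ give rise to the \emph{same} matrix $U_A$, then $H_A(P_1P_2^{-1})=H_A$ by the uniqueness statement in \Cref{lem:count-u-and-pairs}, so $P_1P_2^{-1}\in\Sym(H_A)$. By \Cref{cor:identity-automorphisms}, using $d\geq 3$, every such element acts as a logical identity on $\Code$; hence $P_1$ and $P_2$ realize the same logical action on the entire code. Therefore distinct logical actions inject into the set of distinct $U_A$'s, yielding the bound $m_u^A$, and by the symmetric argument applied to $B$ we likewise obtain $m_u^B$. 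The number of distinct logical actions is thus at most $\min(m_u^A,m_u^B)$.

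Finally, I would apply \Cref{upp nbr permutation} to each classical code. Taking $H_A$ and $H_B$ to be the classical parity-check matrices (rank $n-\dim(A)$ and $n-\dim(B)$ respectively), the proposition gives $m_u^A\leq \frac{n!}{\dim(A)!}$ and $m_u^B\leq \frac{n!}{\dim(B)!}$. Taking the minimum of these two bounds yields exactly $\frac{n!}{\max(\dim(A),\dim(B))!}=\frac{n!}{k_{max}!}$, proving the claim.

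The main subtlety is the second step: ensuring that two CSS-preserving permutations which induce the same $U_A$ really do realize the same logical action on \emph{all} logical qubits (both $X$-type and $Z$-type cosets), rather than only on the $A$-side of the stabilizer. This is exactly where the distance hypothesis $d\geq 3$ is used, via \Cref{cor:identity-automorphisms} and underlying it \Cref{lem:duplicate-columns}, which collapse duplicate-column permutations into honest logical identities and allow us to cleanly identify distinct logical actions with distinct $U_A$-matrices before applying the counting bound of \Cref{upp nbr permutation}.
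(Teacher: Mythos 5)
Your proposal is correct and follows essentially the same route as the paper's own proof: it uses \Cref{lem:count-u-and-pairs} and \Cref{cor:identity-automorphisms} (with $d\geq 3$) to identify permutations with the same induced $U_A$ (respectively $U_B$) as having the same logical action, then applies \Cref{upp nbr permutation} to bound the number of such $U_A$'s and $U_B$'s before taking the minimum. The subtlety you flag about needing $d\geq 3$ to collapse $\Sym(H_A)$ to honest logical identities is exactly the point the paper's argument relies on.
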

\begin{proof}
    From \Cref{cor:identity-automorphisms}, if two permutations $P_1$ and $P_2$ are such that $H_AP_1=H_AP_2$, they must produce the same logical action: this identity tells us $P_1=PP_2$ where $H_A=H_AP$, and $P$ acts as a logical identity. 

    Then \Cref{lem:count-u-and-pairs} tells us that the number of permutation automorphisms on the code, quotiented out by this subgroup, is precisely the number of invertible matrices $U$ where there is some permutation such that $UH_A=H_AP$.

    Then from \Cref{upp nbr permutation}, the number of such invertible matrices is at most $\frac{n!}{(n-\dim(H_A))!}$. Since $H_A$ is the parity check matrix, the dimension of the code $A$ is $k_A=n-\dim(H_A)$. Thus, there are at most $\frac{n!}{k_A!}$ distinct logical operations.

    The same logic must apply to $B$, so our upper bound is 
    \begin{equation}
        \min\left\{\frac{n!}{k_A!},\frac{n!}{k_B!}\right\}= \frac{n!}{k_{max}!}
    \end{equation}.

\end{proof}

\Cref{thm:logical-permutation-count} gives a bound in terms of the rates of the underlying \emph{classical} codes. Transforming this into a bound with a rate on the quantum code takes some care, using \Cref{prop:code-rates}.

\subsection{Operators Constructed From SWAPs}\label{sec:swap-operators}
We are now going to use the upper bound on the number of automorphisms on classical codes, to describe families of CSS codes on which certain gates are not SWAP addressable. The idea is that if we have more possible logical gates (e.g. SWAPs) than automorphisms, then SWAP cannot be SWAP addressable. 

We prove this using the lemmas in the Appendix, which describe the asymptotic behavior of the CSS code rates required for the theorem.

\begin{theorem}\label{thm no swap by swap}
    Let $\mathcal{C}_n = \CSS(C_n^1,C_n^2 )$ be a family of CSS codes such that calling $\rho_n$ the rate of $\mathcal{C}_n$ and $\rho'_n$ the maximum rate of the classical codes $C_n^1,C_n^2$, if there exists $n_1 \in \mathbb{N}$ such that for all $n>n_1$ we have $\rho_n + \rho_n' > 1$, then SWAP is not SWAP-addressable on this family of codes.    
\end{theorem}

\begin{proof}
    By \Cref{thm:logical-permutation-count}, the maximum number of distinct logical operations formed by permutations is at most $\frac{n!}{(\rho'_n n)!}$. 

    Since by assumption $\rho_n + \rho'_n > 1$, we can use \cref{integral_pas_beau} and conclude there exists some $n_0$ such that for all $n\geq n_0$, $\frac{n!}{(\rho'_n n)!} < (\rho_n n )! = k! $.  
    
    If the addressable SWAPs existed, we could compose them to obtain all $k!$ logical permutation gates from physical permutations; however, the inequality shows that there are not enough allowed physical permutation circuits to produce this many logical operators.

\end{proof}

\begin{remark}\label{remark cnots by swaps}
    We can swap two qubits by using 3 CNOTs between them:
    \begin{equation}
        \SWAP_{i,j} = \CNOT_{i,j} \CNOT_{j,i} \CNOT_{i,j}.
    \end{equation}
    This means that if we had all logical CNOTs then we could generate all logical SWAPs. Thus CNOT is not SWAP-addressable on those codes either. 
\end{remark}

Intuitively, if SWAP were addressable, we could compose addressable SWAPs to generate all $k!$ logical permutations of the $k$ logical qubits. However, \cref{thm:logical-permutation-count} tells us that permutations of physical qubits can only produce at most $\frac{n!}{k_{\max}!}$ distinct logical operations. For high-rate codes, this quantity grows much slower than $k!$, so there are simply not enough 
physical permutations to generate all logical ones. The same argument extends to CNOTs and any $2$-qubit gate, though with different rate threshold. This idea and the exact rate threshold are formalized in \cref{coro style}.

\begin{corollary}\label{coro style}
The following gates are not permutation addressable on CSS codes with the following rates:
    \begin{itemize}
        \item 
    SWAP gates for codes with an asymptotical rate greater than $\frac{1}{3}$;
    \item 
    Any 2-qubit gate for codes with an asymptotical rate greater than $\frac{3}{4}$;
    \item 
    CNOTs gates for codes with an asymptotical rate in $\Omega\left(\sqrt{\frac{\log n}{n}}\right)$.
    \end{itemize}
\end{corollary}
\begin{proof}
    Let $(\mathcal{C}_n)_{n \in \mathbb{N}}$ be a family of CSS codes and $n_0 \in \mathbb{N}$ such that $\forall \ n > n_0, \ \rho_n > \frac{1}{3}$.  Let us now fix $n > n_0$, and call $C_n^1,C_n^2$ the classical codes making $\mathcal{C}_n$, and $\rho_n',\rho_n''$ the maximum and minimum of their rates. We get that $\rho_n = \rho_n' + \rho_n'' -1$ by \Cref{prop:code-rates}. Thus $ \rho_n \leq 2\rho_n' -1 $ which means that $\rho_n' \geq \frac{\rho_n+1}{2}$. Hence $\rho_n + \rho_n' \geq \frac{\rho_n +1}{2} + \rho_n > \frac{4}{6} + \frac{1}{3} >  1$.
    Thus, for all $n>n_0$, $\rho_n + \rho_n' > 1$, and we can now use \cref{thm no swap by swap}. 

    For arbitrary 2-qubit gates, if a 2-qubit gate is permutation addressable it is parallel addressable since permutations are closed under composition. Thus, the number of such logical gates is at least the number of choices of disjoint pairs of qubits, and there are at least $\min\{k!!,k(k-1)!!\}$ such pairs.
    Applying the logic above with $\rho_n > \frac{3}{4}$, we see that $\frac{1}{2}\rho_n+\rho_n'>1$, and by \Cref{cor:count-double-factorial} this is greater than the number of permutation automorphisms on the code.

    For CNOT gates, CNOTs generate all invertible linear matrices on the computational basis. By \Cref{lem:count-GLn}, the size of invertible linear matrices on $k=\rho n$ logical qubits is asymptotically greater than $\frac{n!}{(\rho'n)!}$ when $\rho_n > \sqrt{\frac{\log n}{n}}+\Omega\left(\frac{1}{\sqrt{n}}\right)$. 

\end{proof}

These results illustrate an example of a trade-off between the performance of a code (its parameters) and how easy it might be to implement some logical operations on it. In \cite{quantum_tanner_codes,asymptotically_good_ldpc_PK} they prove methods to construct families of good quantum codes for any rate $0 < \rho < 1$; however, using \cref{thm no swap by swap} we know that using only physical swaps, none of these families can implement addressable CNOTs, and starting from $\rho > \frac{1}{3}$ they cannot implement addressable logical swaps.

In \cite{photonic2025swapCNOT} they construct various addressable Clifford gates, including CNOTs, from permutation automorphisms. The rate of their code family is $\Theta\left(\frac{\log^2 n}{n}\right)$, and \Cref{coro style} shows that they are only a quadratic factor away from our upper bound on the best possible rate with this technique.

\subsection{CNOT and CZ Results}\label{sec:automorphism-cnots}
Our techniques in the last section relied on counting automorphisms. We proved a more general statement about isomorphisms between codes so that we can consider addressable gates between two codes. We will consider two codes where we use physical CNOT or CZ gates, with the controls in one code and the targets in the second. We call a circuit \emph{global} if precisely one gate acts on every physical qubit. This condition captures the addressable CCZ gates in \cite{hvwz2025ccz}.

Thanks to this condition, the physical gates define a permutation $\pi\in \mathcal{S}_n$, where if $i$ is the control of a CNOT in the first code, $\pi(i)$ is the target of that CNOT in the second code (defined similarly for CZ). This representation using permutations makes it possible to link the results on isomorphisms to CNOTs and CZs. We will write such a circuit as $\CNOT(\pi)$ or $\CZ(\pi)$.

\begin{example}
    Consider the following unitary : 
    
    \begin{center}
    \resizebox{0.4\textwidth}{!}{
        \rotatebox{270}{
        \includegraphics[]{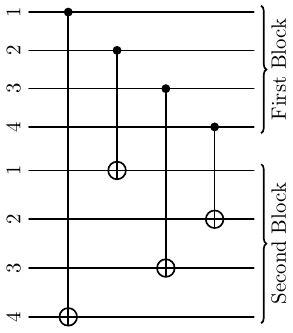}
        }
    }
    \end{center}
    
    where control and targets are from two different blocks of the same type of code made of $4$ physical qubits. 
    The permutation $\pi$ would here be $\pi = (142)(3)$. 
\end{example}

A critical component of the previous section was counting automorphisms \emph{up to distinct logical actions}. We prove an analogous result here. For a general gate $U$, we use $\CU(\pi)$ to denote a global controlled-$U$ circuit, where every qubit $i$ in one code is a control, whose target qubit is $\pi(i)$ in a second code.

\begin{lemma}\label{lem:cu-perm-invariant}
    Let $U$ be a single-qubit gate, and consider a global controlled-$U$ circuit with all controls in $\Code_1=\overline{CSS}(A_1,B_1)$ and all targets in $\Code_2=\overline{CSS}(A_2,B_2)$, and let $\pi$ be the permutation induced by mapping control to target. If $\pi_1\in\text{Sym}(H_{A_1})\cup \text{Sym}(H_{B_1})$ and $\pi_2\in\Sym(H_{A_2})\cup \Sym(H_{B_2})$, then $\CU(\pi_2\circ \pi \circ \pi_1)$ has the same logical action as $\CU(\pi)$ if both $\Code_1$ and $\Code_2$ have distance at least 3.
\end{lemma}
\begin{proof}
    We can see that $\CU(\pi_2\circ\pi\circ \pi_1)$ is equivalent to the physical circuit obtained by permuting $\Code_1$ by $\pi_1^{-1}$, permuting $\Code_2$ by $\pi_2^{-1}$, applying $\CU(\pi)$, then permuting $\Code_1$ and $\Code_2$ back by $\pi_1$ and $\pi_2$ respectively.
    
    Using \Cref{cor:identity-automorphisms}, these physical permutations are logical identities, thus the logical action is the same as $\CU(\pi)$ itself.

\end{proof}

\begin{proposition}\label{prop:no-global-cnot}
    CNOT is not depth-one global CNOT parallel addressable on any two CSS codes with the same parameters, and asymptotical rates greater than $\frac{1}{3}$. 
\end{proposition}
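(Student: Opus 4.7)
My plan is to mirror the argument used for \Cref{thm no swap by swap}: count the depth-one global CNOT circuits $\CNOT(\pi)$ that can preserve both codespaces, and show that this count is strictly smaller than the number of logical actions required by parallel addressability.

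\textbf{Step 1 (codespace preservation $\Leftrightarrow$ permutation isomorphism).} First I would use the CNOT commutation rules to analyze when $\CNOT(\pi)$ preserves $\Code_1\otimes\Code_2$. An $X$-stabilizer $X_1^a$ of $\Code_1$ conjugates to $X_1^{a}\,X_2^{\pi(a)}$, and a $Z$-stabilizer $Z_2^b$ of $\Code_2$ conjugates to $Z_1^{\pi^{-1}(b)}\,Z_2^{b}$. Requiring both to remain stabilizers forces $\pi(A_1)\subseteq A_2$ and $\pi^{-1}(B_2)\subseteq B_1$, and since the two codes share the same $\llbracket n,k,d\rrbracket$ parameters (so $\dim A_1=\dim A_2$ and $\dim B_1=\dim B_2$) these inclusions are equalities. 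Hence $\pi$ must be a permutation isomorphism from $\CSS(A_1,B_1)$ to $\CSS(A_2,B_2)$ in the sense of \Cref{sec:isomorphisms}.

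\textbf{Step 2 (upper and lower bounds on logical actions).} The proof of \Cref{upp nbr permutation} only uses that $UH_2=H_1P$ for some invertible $U$ and permutation $P$, so it carries over verbatim to isomorphisms between two distinct codes. Combined with \Cref{lem:cu-perm-invariant}, which identifies permutations that produce the same logical action, this bounds the number of distinct logical actions achievable by $\CNOT(\pi)$ by $\min\{\tfrac{n!}{k_{A_1}!},\tfrac{n!}{k_{B_1}!}\}=\tfrac{n!}{k_{\max}!}$, where $k_{\max}=\max\{\dim A_1,\dim B_1\}$. On the other hand, parallel addressability applied to the perfect matchings $\sigma\in S_k$ of logical qubits demands a single depth-one global $\CNOT(\pi_\sigma)$ realising $\bigotimes_{i=1}^{k}\overline{\CNOT}_{i\to\sigma(i)}$; distinct $\sigma$ give distinct logical actions, so at least $k!$ valid $\pi_\sigma$ are required.

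\textbf{Step 3 (contradiction).} Combining the two bounds yields $(\rho_n n)!\leq \tfrac{n!}{(\rho'_n n)!}$, where $\rho_n$ denotes the quantum rate and $\rho'_n=k_{\max}/n$. By \Cref{prop:code-rates}, $\rho'_n\geq \tfrac{\rho_n+1}{2}$, so $\rho_n+\rho'_n\geq \tfrac{3}{2}\rho_n+\tfrac{1}{2}>1$ as soon as $\rho_n>\tfrac{1}{3}$. Then \Cref{integral_pas_beau} gives the reverse strict inequality $\tfrac{n!}{(\rho'_n n)!}<(\rho_n n)!$ for all sufficiently large $n$, the desired contradiction.

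The delicate step is Step 2: one must verify carefully that the isomorphism count from \Cref{upp nbr permutation} and the quotient by identity-acting permutations, both originally formulated for automorphisms of a single code in \Cref{sec:isomorphisms}, really transfer to the two-code setting governed by \Cref{lem:cu-perm-invariant}, without losing a constant that would weaken the $\tfrac{1}{3}$ threshold. Everything else reduces to the same factorial estimate already used for the SWAP impossibility.
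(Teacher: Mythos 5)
Your proposal is correct and takes essentially the same route as the paper: establish that a valid depth-one global $\CNOT(\pi)$ forces $\pi$ to be a permutation isomorphism between the two CSS codes, invoke \Cref{lem:cu-perm-invariant} and the isomorphism count from \Cref{upp nbr permutation} to bound the number of distinct logical actions by $\frac{n!}{k_{\max}!}$, demand $k!$ actions from parallel addressability, and conclude via \Cref{prop:code-rates} and \Cref{integral_pas_beau}. The concern you raise at the end is in fact already resolved by how the paper sets things up: \Cref{upp nbr permutation} is stated for a pair $(H_1,H_2)$ (isomorphisms, not just automorphisms), and \Cref{lem:cu-perm-invariant} is already formulated for two codes, so the transfer is immediate and introduces no extra constant.
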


\begin{proof}
     Let the two codes be $\overline{CSS}(A_1,B_1)$ and $\overline{CSS}(A_2,B_2)$.
     
     By studying the actions on stabilizers, we have that $\CNOT_{I,J}$ between two blocks of the code being valid implies that  
    \[ \forall \ a \in A_1, \ \pi(a \cap I) \in A_2 \]
    \[ \forall \ b \in B_2, \ \pi^{-1}(b \cap J) \in B_1 \]

    where $I$ is the set of control on the first block of the code, and $J$ the set of target on the second. $\pi$ is the bijective function going from control to target, and these equations show that it acts as an isomorphism from $A_1$ to $A_2$. Also $\pi^{-1}$ is an isomorphism from $B_2$ to $B_1$, so $\pi$ is an isomorphism from $B_1$ to $B_2$. Thus, each valid depth-one, global CNOT circuit corresponds to a valid isomorphism from the first code to the second, so our isomorphism upper bounds also apply to the number of such CNOT circuits.

    Moreover, \Cref{lem:cu-perm-invariant} tells us that these circuits have distinct actions only if the permutations are in distinct cosets of $\text{Sym}(H_{A_2})$, and the number of such distinct actions is bounded by $\frac{n!}{k_{max}!}$ by \Cref{thm:logical-permutation-count}.

    If we want these circuits to implement all parallel addressable logical CNOTs between the two codes, there must be $k!$ such gates. 
    
    The same counting arguments for the SWAP thus tell us that, asymptotically, there are not enough automorphisms to construct all such gates.

\end{proof}

\begin{proposition}\label{prop:no-global-CZ}
    $\CZ$ is not depth-one global $\CZ$ parallel addressable on any two CSS codes with the same parameters, and asymptotical rates greater than $\frac{1}{3}$. 
\end{proposition}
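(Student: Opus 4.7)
The plan is to mirror Proposition~\ref{prop:no-global-cnot}, swapping the CNOT stabilizer propagation rules for the CZ ones. First I would determine which $\CZ(\pi)$ circuits preserve the codespace of $\Code_1 \otimes \Code_2$. Using $\CZ(X \otimes I)\CZ = X \otimes Z$ and $\CZ(I \otimes X)\CZ = Z \otimes X$ while $\CZ$ fixes both $Z \otimes I$ and $I \otimes Z$, conjugating an $X$-stabilizer $X^a \otimes I$ with $a \in A_1$ yields $X^a \otimes Z^{\pi(a)}$, which lies in the combined stabilizer group iff $\pi(a) \in B_2$. Symmetrically, the $X$-stabilizers of $\Code_2$ force $\pi^{-1}(A_2) \subseteq B_1$, while $Z$-stabilizers yield no new condition. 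Thus any valid $\pi$ is a permutation isomorphism from $A_1$ to $B_2$ (and simultaneously from $B_1$ to $A_2$)—the roles of the two classical codes have simply been swapped relative to the CNOT case.

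Next, I would invoke Lemma~\ref{lem:cu-perm-invariant} so that permutations differing by elements of $\Sym(H_{A_1})$ or $\Sym(H_{B_2})$ give the same logical action, and then apply Proposition~\ref{upp nbr permutation} and the counting argument of Theorem~\ref{thm:logical-permutation-count} to bound the number of distinct logical actions arising from depth-one global $\CZ$ circuits by $\frac{n!}{k_{\max}!}$, where $k_{\max}$ is the largest classical-code dimension among $A_1, B_1, A_2, B_2$. For parallel addressability, each bijection $\sigma$ between the $k$ logical qubits of $\Code_1$ and those of $\Code_2$ must correspond to a single depth-one global $\CZ$ circuit enacting the parallel pattern $\{\overline{\CZ}_{i,\sigma(i)}\}$, and distinct $\sigma$ yield distinct logical actions because each $\overline{\CZ}_{i,\sigma(i)}$ dresses $\overline{X}_i$ in the first block with $\overline{Z}_{\sigma(i)}$ in the second block, a pattern that uniquely determines $\sigma$. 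This forces $k! \leq \frac{n!}{k_{\max}!}$.

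Finally, the asymptotic counting mimics Corollary~\ref{coro style}: using $\rho_n = \rho_n' + \rho_n'' - 1$ from Proposition~\ref{prop:code-rates}, the hypothesis $\rho_n > \frac{1}{3}$ yields $\rho_n + \rho_n' > 1$, and Lemma~\ref{integral_pas_beau} then gives $k! > \frac{n!}{k_{\max}!}$ for all sufficiently large $n$, contradicting the bound above.

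The main obstacle I anticipate is verifying that distinct bijections $\sigma$ truly induce distinct logical actions—intuitively immediate since logical CZ patterns can be read off from how logical $\overline{X}$ operators acquire logical $\overline{Z}$ phases on the opposite block, but a short argument is required to rule out stabilizer equivalences in $\Code_1 \otimes \Code_2$ that could identify two different matchings. A secondary nuisance is that, unlike the CNOT case where $\pi$ relates $A_1 \leftrightarrow A_2$ and $B_1 \leftrightarrow B_2$, here it cross-links $A_1 \leftrightarrow B_2$ and $B_1 \leftrightarrow A_2$, so I would need to pick the tightest bound among the four possible choices of $k_{\max}$ and confirm the rate hypothesis still makes it sharp.
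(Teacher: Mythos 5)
Your proposal is correct and follows essentially the same route as the paper: derive the CZ stabilizer constraints $\pi(A_1)\subseteq B_2$ and $\pi^{-1}(A_2)\subseteq B_1$, conclude $\pi$ is a permutation isomorphism cross-linking $A_1\leftrightarrow B_2$ and $B_1\leftrightarrow A_2$, invoke \Cref{lem:cu-perm-invariant} and \Cref{upp nbr permutation}/\Cref{thm:logical-permutation-count} to bound distinct logical actions by $\frac{n!}{k_{\max}!}$, then reuse the $\rho_n>\frac{1}{3}$ counting from \Cref{coro style}. The two ``obstacles'' you flag (distinct $\sigma$ giving distinct logical actions, and which classical-code dimension controls the bound) are genuine but are exactly the points the paper also leaves at the level of ``the counting arguments still apply,'' and your sketched resolutions of both are sound; indeed your direction of inclusion $\pi^{-1}(A_2)\subseteq B_1$ is the correct one where the paper's prose momentarily reverses it.
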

\begin{proof}
    The proof will proceed almost identically to \Cref{prop:no-global-cnot}. We first note that the required stabilizer relations are 
    \begin{align}
        \forall a \in A_1, \ &\pi(a) \in B_2\\
        \forall a\in A_2, \ &\pi^{-1}(a)\in B_1
    \end{align}
    (with the intersection with $I$ and $J$ not shown because the circuit is global).

    This tells us that $\pi(A_1)\subseteq B_2$ and $\pi(B_1)\subseteq A_2$. Since the codes have the same parameters, we know that $n-k=\dim(A_1)+\dim(B_1)=\dim(A_2)+\dim(B_2)$ (where $k$ is the number of logical qubits). Thus, $\pi(A_1)=B_2$ and $\pi(B_1)=A_2$, giving us an isomorphism $\pi$ from $A_1$ to $B_2$ and $B_1$ to $A_2$. Since \Cref{upp nbr permutation} does not care about the role of the codes, the counting arguments still apply, and still give the required upper bound.

\end{proof}

\begin{remark}
    The stabilizers relations stay valid in the context of qudits, which allows us to apply this result to the case of \cite{hvwz2025ccz}. To see this, in the context of qudits, the commutation rule of $\CZ_d$ gives : 
    \begin{align*}
    \CZ_d (X_d \otimes I) &= (X_d \otimes Z_d) \CZ_d \\
   \CZ_d (I \otimes X_d) &= (Z_d \otimes X_d) \CZ_d \\
    \CZ_d (Z_d \otimes I) &= (Z_d \otimes I) \CZ_d .\\
    \CZ_d (I \otimes Z_d) &= (I \otimes Z_d) \CZ_d .
    \end{align*}

Thus let $a \in A$ be a vector with coefficient in $\mathbb{F}_q$ representing a $X$ stabilizer for a CSS qudit code. Then a global depth-1 circuit of CZ sends it to a stabilizer iff $\pi(a) \in B$. The only difference here being that we use coefficient in $\mathbb{F}_q$, but the relation stays the same.
    
\end{remark}

\Cref{prop:no-global-CZ} relates to
Open Question 1 from \cite{hvwz2025ccz}: are there asymptotically good codes admitting transversal, addressable CCZ gates? The constructions they give for addressable CCZ gates are global, depth-one, and readily ``downgrade'' to addressable CZ gates. Thus, our results imply their techniques will not produce parallel addressable CCZ gates on an asymptotically good code with rates greater than $\frac{1}{3}$, unless one can (a) produce parallel addressable CCZ without simultaneously allowing parallel addressable CZ; or (b) produce non-global parallel addressable CCZ; or (c) use gates other than physical CZ or CCZ. Furthermore, using growing fields, \cite{hvwz2025ccz} provides codes with non-zero relative distance and rate that allow addressable CCZ using global circuits. In this case, they show that they can make both rate and relative distance at least $\frac{1}{6}$, but by reducing the relative distance, they can boost the rate to make it arbitrarily close to $\frac{1}{3}$. While their maximal rate matches our bound, it is important to note that we prove impossibility of parallel addressability, while they show the existence of addressability. Hence, it could be that they can achieve a better rate than $\frac{1}{3}$, but they would not be able to convert this method to parallel addressability.

Together, the results from \Cref{sec:swap-operators,sec:automorphism-cnots} suggest a concerning inability to create entanglement between the qubits in high-rate quantum code. We cannot apply CNOTs between arbitrary pairs of qubits in the high-rate code with permutations, and many families of CNOTs (say, all CNOTs from qubit $i$ to $i+1$) will generate all CNOTs, and thus are also impossible.

We might instead hope to use CNOTs to copy some data to another good quantum code, entangle the results there, and copy them back. However, \Cref{prop:no-global-cnot} tells us that an addressable CNOT between two good codes would only be able to send a given logical qubit in the first code to a small subset of logical qubits in the second code, and since the same no-go result would apply to the second, we would not be able to permute or entangle them before trying to copy them back. 

To escape these results, we note two critical assumptions: first, that \emph{both} codes have asymptotic rate at least $\frac{1}{3}$, and second, that the CNOTs are implemented by a permutation of the physical qubits. To escape the first, we might imagine copying to a less efficient code like the surface code. Such an architecture resembles caching, where quantum data is stored in the asymptotically good code, where computations are difficult, then copied into the surface code for computation. 

For the second restriction, one might imagine implementing logical CNOTs with physical CNOTs. However, notice that if the physical CNOTs also compose to physical swaps, then we run afoul of the same permutation counting arguments. This does not mean a CNOT cannot be implemented: our results do not forbid some CNOTs together with single-qubit Clifford gates to enact a logical CNOT.

\section{Algorithms for finding splits in CSS codes}\label{algoslpit}

We mentioned before that if a code splits then its distance is the minimum of the codes making it. Hence when we are trying to build the best code possible, we do not want them to split. In particular, the quantum Tanner code construction takes some code $C_A, C_B$ at random and shows that with some probability, it will give an asymptotically good code.  

Given a code, it is hard to compute its distance: it is equivalent to the problem of finding the smallest non-empty subset of dependent columns in the parity check matrix, and this is NP-hard. We thus expect that the splitting of the code gives a nice heuristic: if a code splits then most likely it will not have a good distance. Conversely, we hope that codes built this way with bad distance will split with good probability. This would give a better way to sample good quantum Tanner codes. 

In the following, we present two algorithms that detect if a code splits. The second approach detects and returns the splits in quadratic time (linear time for LDPC codes) in the number of qubits.

\subsection{System solving approach}

There is another equivalent way of defining splitting codes that we did not talk about in the splitting section.

\begin{proposition}\label{prop:diag-split-test}
    $\Code = \overline{\CSS}(A,B)$ is a splitting code if there exists a diagonal matrix $D$ with coefficients not all equal such that for all $a\in A$ and $b\in B$, $aD\in A$ and $bD\in B$.
\end{proposition}

\begin{proof}

    Starting by the easy direction, assuming that $C$ splits on some support $h$, then both $A,B$ split on $h$. Let us take $D$ defined as $D_{i,i} = 1$ if $i \in h$ and $0$ otherwise, since $C$ splits we have that the coefficients of $D$ are not all equal. Multiplication by $D$ is just a projection onto $h$, so $aD\in A$ and $bD\in B$ for all $a\in A$ and $b\in B$, as required.

    In the other direction, let us first show that for any polynomial with integer coefficients $P$, we have $aP(D) \in A$ and $bP(D)\in B$ for any $a,b\in A\times B$. By assumption we have $aD\in A$, hence applying this rule again gives that for all $ s \in \mathbb{N}, aD^s\in A$. Finally, since $A$ is linear, we get that for all polynomials $P$ with integer coefficients, $aP(D)\in A$. The same reasoning applies to $B$. 
    
    Now, we also know that the coefficients of $D$ are not all equal, hence there exists two non-empty complementary sets $U,V$ of indices such that the values of $D$ at indices in $U$ and $V$ are always different and $U\cup V=\llbracket n\rrbracket$. Consider $P_U$ the interpolation polynomial such that $P_U(D_{i,i}) = 1$ if $i \in U$ and $0$ otherwise. Let $P_V$ be defined the same way over $V$. We thus have $(P_U + P_V)(D) = I_{n}$. Thus $A P_U(D) + A P_V(D) = A I_{n} = A$, and $A P_U(A) \subseteq A$, $A P_V(A) \subseteq A$. Calling $A_U = A P_U(D)$ and $A P_V(D)$ we see that $A$ splits into $A_U, A_V$ on support $h = U$, since $P_U(D)$ and $P_V(D)$ are orthogonal projections. 

    The same procedure splits $B$ into $B_U:=BP_U(D)$ and $B_V:=BP_V(D)$, which have the same supports $h$ and $\llbracket n\rrbracket\setminus h$, so $\Code$ splits as well.

\end{proof}

\begin{remark}
    In our case, since we work with binary vectors, this is exactly the same definition as the one with the support $h$ on which we project. But it still works in the non binary case, and might be more interesting. However, this definition is great as it gives a natural idea for an algorithm checking if a code splits. 
\end{remark}

\paragraph{}
For a code $\overline{\CSS}(A,B)$ with generators and parity checks $G_A,H_A,G_B,H_B$, let us consider the following equation : 
\begin{equation}
    \begin{pmatrix}
        G_A & G_B
    \end{pmatrix}\begin{pmatrix}
        D & 0 \\ 0 & D 
    \end{pmatrix}\begin{pmatrix}
        H_A \\
        H_B
    \end{pmatrix}=0
\end{equation}
 when $D$ is a diagonal matrix on $n$ qubits. We know that $D = \lambda I_n$ is always a solution of the equation as it preserves the codespace. Furthermore, if $D$ is diagonal with coefficients not all equal, then it is in the solution space if and only if the code splits by \Cref{prop:diag-split-test}. This gives rise to \Cref{alg:split-testing}.

\begin{algorithm}[H]
\caption{Split testing}\label{alg:split-testing}
\begin{algorithmic}[1]
\REQUIRE Generator and parity checks $G_A,H_A$ and $G_B,H_B$ of the codes $A,B$
\ENSURE Detect if the code splits
\STATE $\mathcal{S} \leftarrow \text{Solve}_D \footnotesize \left( \begin{pmatrix}
        G_A^T \\ G_B^T
    \end{pmatrix}^T\begin{pmatrix}
        D & 0 \\ 0 & D 
    \end{pmatrix}\begin{pmatrix}
        H_A \\
        H_B
    \end{pmatrix}=0\right)$. 
\STATE $d \leftarrow \text{dim}(\mathcal{S})$
\RETURN $d > 1 $ ? 
\end{algorithmic}
\end{algorithm}

\paragraph{}
This algorithm has complexity $\mathcal{O}(n^\omega)$ which represents the time complexity of solving a system on $\mathcal{O}(n)$ qubits ($2 \leq \omega < 3$). This algorithm is intuitive after seeing this new definition but it is not optimal. The graph theoretical approach will give the splits explicitly in quadratic time.

\subsection{Graph theoretical approach}

In \cite{BurnistonJohn2023} the author develops a method to identify if a matrix is "reducible" where $F$ being reducible means that it either has a column of zeros or that there is an invertible map $U$ and a permutation $P$ such that $UFP= \begin{pmatrix}
    F_1 & 0 \\ 
    0 & F_2
\end{pmatrix}$.

In our case, the definition of reducibility is the same as the definition of splitting of classical codes. Now to adapt in our case, we need to obtain the splits of $X,Z$ stabilizers but also compare them and find a common split. If such a split exists then the code splits.

The following algorithm follows the method from \cite{BurnistonJohn2023} to find if a matrix splits. Intuitively, each qubit represent a vertex, and we draw an edge between two qubits if there exists a stabilizer in which they both appear in the row reduced version of the $X$ and $Z$ stabilizer generators. To make it easier and more efficient we consider another graph which is the Tanner graph: instead of having each edge labeled by a stabilizer, qubits have an edge to a stabilizer if they appear in them. We then get the connected components of this graph. Each connected component represents a part of the split of the code as well as the stabilizer it concerns. We can then extract the qubits from those components if we only care about the splits. The proof of correctness of this approach can be derived from the one in \cite{BurnistonJohn2023}.

Given generators of its stabilizer group represented as a $r \times n$ matrix, this algorithm returns in time $\mathcal{O}(n + c \times n)$ the split decomposition of a CSS code, where $c$ is the maximal number of qubits in a stabilizer of the generator. Hence for LDPC codes, this algorithm is linear, while it is quadratic in general. 

\begin{algorithm}[H]
\caption{Common Block Diagonalization}
\begin{algorithmic}[1]
\REQUIRE Stabilizer matrix $\mathcal{S}= \begin{bmatrix} \mathcal{S}_X & 0 \\ 0 & \mathcal{S}_Z \end{bmatrix}$
\ENSURE Partition of columns for common block diagonalization

\STATE $\mathcal{S}_X' \leftarrow \text{Row\_Reduced\_Form}(\mathcal{S}_X)$
\STATE $\mathcal{S}_Z' \leftarrow \text{Row\_Reduced\_Form}(\mathcal{S}_Z)$
\STATE $\mathcal{S}' \leftarrow \begin{bmatrix}\mathcal{S}_X' & 0 \\ 0 & \mathcal{S}_Z'\end{bmatrix}$
\STATE $G \leftarrow \text{Tanner\_Graph}(\mathcal{S}')$
\STATE $C \leftarrow \text{Connected\_components}(G)$
\STATE $\text{Blocks} \leftarrow \text{Blocks\_From\_Connected\_Components}(C)$
\RETURN $\text{Blocks}$

\end{algorithmic}
\end{algorithm}

\begin{algorithm}[H]
\caption{Tanner\_Graph}
\begin{algorithmic}[1]
\REQUIRE Matrix $F$ ( in symplectic form ) 
\ENSURE Bipartite graph $G$ with edges between $r_i$ and $c_j$ if $F_{i,j} = 1$ or $F_{i,j+n} = 1$.

\STATE Initialize graph $G = (R, C, E)$ where $R$ are row vertices, $C$ are column vertices, and $E$ are edges
\FOR{each entry $F_{ij}$ in $F$}
    \IF{$F_{ij} = 1 \OR F_{i,j+n} = 1$}
        \STATE Add edge $(r_i, c_j)$ to $E$
    \ENDIF
\ENDFOR

\RETURN $G$

\end{algorithmic}
\end{algorithm}

\begin{algorithm}[H]
\caption{Blocks\_From\_Connected\_Components}
\begin{algorithmic}[1]
\REQUIRE Connected components $C$
\ENSURE Column index for the block diagonalization of the matrix 

\STATE $\text{Blocks} \leftarrow \emptyset$

\FOR{$C_i$ in $C$}
    \STATE $\text{block}_i \leftarrow \emptyset$
    \FOR{$c_j$ column vertex in $C_i$}
        \STATE $\text{block}_i \leftarrow \text{block}_i \cup \{ j \}$
    \ENDFOR
    \STATE $\text{Blocks} \leftarrow \text{Blocks} \cup \{ \text{block}_i \}$
\ENDFOR 
\RETURN $\text{Blocks}$

\end{algorithmic}
\end{algorithm}

\section*{Acknowledgements}
We thank Adam Wills and Rachel Zhang for helpful discussions on the connections between our results. We also thank Arthur Pesah for introducing us to the addressability problem and for helpful discussions. 

S. Jaques acknowledges the support of the Natural Sciences and Engineering Research Council of Canada (NSERC), funding reference number RGPIN-2024-03996.

\bibliographystyle{quantum}
\bibliography{citations}

\section*{Appendix}\label{appendix}

\begin{lemma}\label{integral_pas_beau}
    For all sequences $(\rho_n)$ and $(\rho'_n)$ such that $ \rho_n + \rho'_n > 1$, 
    \[ \exists n_0 \in \mathbb{N} \text{ such that } \forall \ n \geq n_0, \ \frac{n!}{(\rho'_n n)!} < (\rho_n n)! \]
\end{lemma}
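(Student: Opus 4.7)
The plan is to take logarithms of both sides and apply Stirling's formula, so that the hypothesis $\rho_n+\rho'_n>1$ becomes a strict inequality between the dominant $n\log n$ coefficients on the two sides. Since both $(\rho'_n n)!$ and $(\rho_n n)!$ have factorials whose arguments grow linearly in $n$, Stirling is well suited and should leave only lower-order remainders to control.

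First, I would invoke $\log(m!) = m\log m - m + O(\log m)$ and apply it term-by-term. This gives
\begin{align*}
\log\frac{n!}{(\rho'_n n)!} &= (1-\rho'_n)\, n\log n \;-\; \rho'_n n\log\rho'_n \;-\; (1-\rho'_n)n \;+\; O(\log n),\\
\log\bigl((\rho_n n)!\bigr) &= \rho_n\, n\log n \;+\; \rho_n n\log\rho_n \;-\; \rho_n n \;+\; O(\log n).
\end{align*}
Subtracting, the coefficient of the leading $n\log n$ term in $\log((\rho_n n)!) - \log(n!/(\rho'_n n)!)$ is exactly $\rho_n - (1-\rho'_n) = \rho_n+\rho'_n-1 > 0$, while every remaining term is $O(n)$ (the $-\rho'_n n\log\rho'_n$ piece is bounded since $x\log x$ is bounded on $[0,1]$, and similarly for $\rho_n n\log\rho_n$).

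Second, I would conclude by comparing orders: the positive contribution $(\rho_n+\rho'_n-1)\, n\log n$ eventually dominates the $O(n)$ remainder, so there is some $n_0$ beyond which $\log((\rho_n n)!) > \log(n!/(\rho'_n n)!)$, which is the claim.

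The main obstacle is uniformity in $n$. If $\rho_n+\rho'_n-1$ is allowed to shrink with $n$ faster than $1/\log n$, the leading $n\log n$ advantage can be swallowed by the $O(n)$ corrections and no such $n_0$ exists. I would therefore read the hypothesis as supplying a uniform lower bound on the gap, for instance $\liminf_n(\rho_n+\rho'_n) > 1$ (which is the only regime actually used in the code-rate application that follows), and invoke this to turn the asymptotic Stirling estimate into the stated eventual strict inequality. Once such uniformity is granted, the argument is essentially two lines of Stirling.
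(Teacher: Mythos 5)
Your proof is essentially the paper's: the paper bounds the log-sums above and below by integrals of $\log x$, obtaining the same $g(x)=x\log x - x$ estimate that Stirling yields, and then reads off the leading $(\rho_n+\rho'_n-1)\,n\log n$ term exactly as you do. Your caveat about uniformity is well taken --- the lemma as literally stated does need $\rho_n+\rho'_n$ bounded away from $1$ so the $n\log n$ advantage is not swallowed by the $O(n)$ corrections --- but the paper's proof silently makes the same assumption, and every application of the lemma (rates bounded away from $\frac{1}{3}$, etc.) does supply such a uniform gap.
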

\begin{proof} 
    Since we are in the positive part of the logarithm, and it is an increasing function, it will be equivalent to prove that
    $\sum_{\rho'_n n \leq i \leq n} \log(i) < \sum_{2 \leq j \leq \rho_n n} \log(j)$.
    
    Again because the logarithm is increasing, we can make integral inequalities:
    \[ \sum_{\rho'_n n \leq i \leq n} \log(i) \leq \int_{\rho'_n n}^{n+1} \log(x) dx \] 
    \[ \sum_{2 \leq i \leq \rho_n n} \log(i) \geq \int_{1}^{\rho_n n-1} \log(x) dx \]

Now we use $\int_{\rho'_n n}^{n+1} \log(x) dx = g(n+1)-g(\rho'_n n)$ and $\int_{1}^{\rho_n n-1} \log(x) dx = g(\rho_n n-1) - g(1)$, where $g(x) = x\log(x)-x$. . 
    Hence, if $g(n+1)-g(\rho'_n n) < g(\rho_n n-1) -1$ then the inequality holds. We can do some algebra to see that this holds when the following expression is negative:
    $g(n+1) - g(\rho'_n n) - g(\rho_n n-1) + g(1)  = (1-\rho'_n - \rho_n)n\log(n) + \mathcal{O}(n)$.

    Since we assumed $\rho'_n + \rho_n > 1$, it gives that there exists an $n_0$ for which $\forall \ n\geq n_0$, $\frac{n!}{(\rho' n)!} < (\rho n)! $

\end{proof}

\begin{corollary}\label{cor:count-double-factorial}
    For all sequences $(\rho_n)$ and $(\rho'_n)$ such that $\frac{1}{2}\rho_n+\rho'_n>1$,
    \[ \exists n_0 \in \mathbb{N} \text{ such that } \forall \ n \geq n_0, \ \frac{n!}{(\rho'_n n)!} < (\rho_n n)!! \]
\end{corollary}
\begin{proof}
    Recall that the double factorial is $x!!=x(x-2)(x-4)\dots$, and thus for odd $x$ its logarithm can instead be written as $\sum_{1 < j < \frac{1}{2}\rho_n n}\log(2j+1)$, and thus lower-bounded by 
    \[\int_1^{\frac{1}{2}\rho_n n - 1}\log(2x+1)dx =\frac{1}{2}\left(g(\rho_n n - 1) - g(3)\right).\] 
    and the final asymptotic expression is $(1-\rho'_n-\frac{1}{2}\rho_n)n\log(n)+\mathcal{O}(n)$, giving the result.

\end{proof}

\begin{lemma}\label{lem:count-GLn}
    For any integer $q>1$ and all sequences $(\rho_n)$, $(\rho'_n)$ such that $\rho_n,\rho'_n>0$ and $\rho_n > \sqrt{\frac{\log n}{n\log q}} + \Omega\left(\frac{1}{\sqrt{n}}\right)$, 
    \[
         \exists n_0 \in \mathbb{N} \text{ such that } \forall \ n \geq n_0, \ \frac{n!}{(\rho'_n n)!} < q^{(\rho_n n)^2-1} 
    \]
    In particular, $\frac{n!}{(\rho'_n n)!} < \vert \text{GL}_{\rho_n n}(q) \vert$.
\end{lemma}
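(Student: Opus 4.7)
The plan is to take natural logarithms of both sides and reduce to a scalar inequality, which we then verify using the hypothesized lower bound on $\rho_n$. The target inequality $\frac{n!}{(\rho'_n n)!} < q^{(\rho_n n)^2 - 1}$ is equivalent to
\[
  \log(n!) - \log((\rho'_n n)!) \;<\; \bigl((\rho_n n)^2 - 1\bigr)\log q.
\]

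First I would upper-bound the left-hand side using the same integral estimate as in \Cref{integral_pas_beau} (or equivalently, Stirling's formula), yielding
\[
  \log(n!) - \log((\rho'_n n)!) \;\le\; (1-\rho'_n)\, n \log n + O(n) \;\le\; n\log n + O(n),
\]
where the second inequality uses only $\rho'_n > 0$. Substituting into the target inequality and dividing through by $n$ reduces the claim to
\[
  \rho_n^2 \, n \, \log q \;>\; \log n + O(1),
  \qquad \text{equivalently,} \qquad
  \rho_n^2 \;>\; \frac{\log n}{n \log q} + O\!\left(\tfrac{1}{n}\right).
\]

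To close the argument I would apply the hypothesis $\rho_n \ge \sqrt{\log n/(n\log q)} + c/\sqrt{n}$ for some constant $c>0$. Squaring, the cross term $2c\sqrt{\log n/(n\log q)}/\sqrt{n}$ is nonnegative and the $c^2/n$ term absorbs the $O(1/n)$ slack from the previous display once $c$ and $n$ are large enough, giving the required strict inequality. For the ``in particular'' clause, I would use the factorisation $|\mathrm{GL}_m(q)| = q^{m^2} \prod_{j=1}^{m}(1 - q^{-j})$, note that the infinite product $C_q := \prod_{j\ge 1}(1 - q^{-j})$ is a strictly positive constant depending only on $q$, and conclude $|\mathrm{GL}_m(q)| \ge q^{m^2 - O(1)}$; the resulting additive constant in the exponent is harmless, absorbed either by enlarging $n_0$ or by weakening the ``$-1$'' to a slightly larger constant.

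The main obstacle is the bookkeeping of lower-order terms. One must ensure that the $\Omega(1/\sqrt{n})$ slack in the hypothesis really is enough to dominate the $O(n)$ correction from the Stirling approximation, which becomes delicate if $\rho'_n$ is allowed to tend to $0$ (since then the term $\rho'_n n \log(\rho'_n n)$ is no longer automatically on the scale of $n$). This can be handled either by assuming $\rho'_n$ is bounded away from $0$, or by using the crude bound $\log(n!/(\rho'_n n)!) \le \log n! \le n\log n$, which is slightly lossier but is still compatible with the stated $\sqrt{\log n/(n\log q)}$ threshold.
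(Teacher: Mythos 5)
Your argument is correct and follows essentially the same route as the paper's: bound $\log\bigl(n!/(\rho'_n n)!\bigr)$ via the Stirling/integral estimate of \Cref{integral_pas_beau}, reduce to a quadratic inequality in $\rho_n$, and verify it by squaring the hypothesis. Your caution about the $\vert\mathrm{GL}_m(q)\vert$ lower bound is in fact warranted: the paper's claimed $\vert\mathrm{GL}_k(q)\vert\geq q^{k^2-1}$ fails at $q=2$ (e.g.\ $\vert\mathrm{GL}_2(2)\vert=6<8$, since $\prod_{j\geq 1}(1-2^{-j})\approx 0.289<1/2$), so absorbing the additive constant into $n_0$ or weakening $-1$ to $-2$, as you suggest, is the right fix and leaves the asymptotic argument unchanged.
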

\begin{proof}
    We apply precisely the same reasoning as the last two lemmas, and obtain that the result holds when the following expression is negative:
    \begin{equation*}
        g(n+1) - g(\rho'_n n) - \log(q)(\rho_n n)^2 + 1
    \end{equation*}
    which works out to 
    \begin{align*}
        &(n+1)\log(n+1) - n -(\rho'_n n)\log(\rho'_n n) - \log(q)\rho_n^2n^2 \\
        =&  (1-\rho'_n)\frac{\log n}{n} - \log(q)\rho^2_n n^2 O\left(\frac{\log n}{n^2}\right).
    \end{align*}
    The lower bound on $\rho_n$ shows that this will be negative.

    To show that this relates to $\text{GL}_{\rho_n n}(q)$, we note that 
    \begin{equation}
        \vert \text{GL}_k(q)\vert = \prod_{i=0}^{k-1} (q^k - q^i)
    \end{equation}
    which can be lower-bounded by $q^{k^2-1}$.

\end{proof}

\end{document}